\numberwithin{equation}{section}
\tikzstyle{circ}=[circle,draw,inner sep=1pt]
\tikzstyle{hexagon}=[draw, regular polygon,regular polygon sides=6,inner sep=2pt]
\tikzset{>=stealth}
\newenvironment{claim}{  \begin{mdframed}[linecolor=black!0,backgroundcolor=black!10]\noindent\slshape\ignorespaces}{\end{mdframed}}
\def\Nequals#1{$\mathcal{N}{=}#1$}
\def\ind{\mathop{\mathrm{ind}}\nolimits}
\def\tr{\mathop{\mathrm{tr}}\nolimits}
\def\sT{\mathsf{T}}
\def\sY{\mathsf{Y}}
\def\sW{\mathsf{W}}
\def\cH{\mathcal{H}}
\def\bC{\mathbb{C}}
\def\bZ{\mathbb{Z}}
\def\bR{\mathbb{R}}
\def\mf{\mathop{\mathrm{mf}}\nolimits}
\def\MF{\mathrm{MF}}
\def\TMF{\mathrm{TMF}}
\def\tmf{\mathrm{tmf}}
\def\KO{\mathrm{KO}}
\def\KU{\mathrm{KU}}
\def\beq#1\eeq{\begin{align}#1\end{align}}
\theoremstyle{remark}
\newtheorem{rem}[equation]{Remark}
\theoremstyle{plain}
\newtheorem{lem}[equation]{Lemma}
\newtheorem{prop}[equation]{Proposition}
\def\spin{\mathrm{Spin}}
\def\multi{\mathrm{mul}}
\def\id{\mathrm{id}}
\def\vphi{\varphi}
\def\sigm{\varphi}
\def\physZtwo{\bZ_2}
\def\PFirst{
Let $\sT$ be an SQFT with gravitational anomaly $n=8k+2$.
Assume that $\sT \times S^1$ is null-bordant in the space of supersymmetric field theories with gravitational anomaly $8k+3$.
Then $\varphi(\sT)$ is a mod-2 reduction of a weakly-holomorphic integral modular form of weight $4k+2$.
More explicitly, it is given by a linear combination over $\physZtwo$ of $c_4^{k-1-3j}c_6\Delta^j \equiv \Delta^j$,
where $j\le \lfloor \frac{k-1}{3}\rfloor$. 
}
\def\MFirst{
For $n = 8k+2$, 
let $A_n \subset \pi_n\TMF$ denote the kernel of the multiplication by $\eta \in \pi_1 S$. 
Restricted to this subgroup, 
the image of $\vphi|_{A_n}:A_n  \to \pi_n \KO((q)) \simeq \bZ/2((q))$ is 
contained in the mod-two reduction of $\MF_{4k+2}$. 
}
\def\MSecond{
The image of $\vphi:\pi_n\TMF\to \pi_n\KO((q))\simeq \bZ/2((q))$ when $n=8k+2$ is contained in the mod-two reduction of $\MF_{4k}$. 
}
\begin{document}

\begin{titlepage}

\begin{flushright}
% IPMU-21-
\end{flushright}

\vskip 3cm

\begin{center}

{\Large \bfseries  Remarks on  mod-2 elliptic genus}

\vskip 2cm
Yuji Tachikawa$^1$, 
Mayuko Yamashita$^2$,
and Kazuya Yonekura$^3$
\vskip 1cm

\begin{tabular}{ll}
1 & Kavli Institute for the Physics and Mathematics of the Universe (WPI), \\
& University of Tokyo,   Kashiwa, Chiba 277-8583, Japan \\
2 & Research Institute for Mathematical Sciences, Kyoto University, \\
&Kitashirakawa-Oiwake-Cho, Sakyo-ku, Kyoto, 606-8502 Japan\\
3 & Department of Physics, Tohoku University,\\
&Sendai, Miyagi, 980-8578, Japan \\
\end{tabular}

\vskip 2cm

\end{center}

\noindent 
\textbf{For physicists:} 
For supersymmetric quantum mechanics, there are cases when a mod-2 Witten index can be defined, even when a more ordinary $\bZ$-valued Witten index vanishes.
Similarly, for 2d supersymmetric quantum field theories, there are cases when a mod-2 elliptic genus can be defined, even when a more ordinary elliptic genus %, valued in integral modular forms, 
vanishes. 
We study such mod-2 elliptic genera in the context of \Nequals{(0,1)} supersymmetry,
and show that they are characterized by mod-2 reductions of integral modular forms, under some assumptions.

\medskip

\noindent\textbf{For mathematicians:}
We study the image of the standard homomorphism \[
\pi_n\TMF\to \pi_n\KO((q))\simeq \bZ/2((q))
\] for $n=8k+1$ or $8k+2$, by relating them to the mod-2 reductions of integral modular forms.
\end{titlepage}

\setcounter{tocdepth}{2}
\tableofcontents

%\newpage

\section{Introduction}
\label{sec:introduction}

This paper is mostly written toward theoretical physicists, but also contains some rigorous mathematical results. 
We start with an introduction for physicists in Sec.~\ref{sec:intro_phys}.
A short introduction for mathematicians is given in Sec.~\ref{sec:intro_math}.

\subsection{Introduction for theoretical physicists}
\label{sec:intro_phys}

\paragraph{Witten index:} 
Let us consider  a supersymmetric quantum mechanical system.
Its Witten index \cite{Witten:1982df,Witten:1982im} 
is defined by \begin{equation}\label{eq:Witten}
I:=\tr_{\cH} (-1)^F e^{-\beta H},
\end{equation}
where $\cH$ is the Hilbert space of the system, $(-1)^F$ is the fermion parity,
$H$ is the Hamiltonian, and $\beta$ is the inverse temperature.
Using the fact that $H=Q^2$ where $Q$ is the supercharge satisfying $\{(-1)^F,Q\}=0$,
we can show that the Witten index $I$ is independent under any supersymmetry-preserving continuous deformations including the change in $\beta$.\footnote{\textcolor{black}{
This is due to the cancellations between states $\ket{+}$ and $\ket{-}$, where $\ket{+}$ is an eigenstate of $H$ and $(-1)^F$ with eigenvalues $H=E>0$ and $(-1)^F=+$, 
and $\ket{-}$ is defined by $\ket{-}=E^{-1/2}Q\ket{+}$. The precise cancellations hold only if the energy spectrum is discrete. See e.g. \cite{Dabholkar:2021lzt} and references therein for what happens when the spectrum is not discrete. The same remark applies to the elliptic genus defined below. }
}
As such, it provides us with an important $\bZ$-valued invariant in the study of such systems. 

\paragraph{Elliptic genus:} 
In supersymmetric quantum field theories,
the Witten index can often be further refined.
For example, for a two-dimensional (2d) theory with \Nequals{(0,1)} supersymmetry, we can consider its partition function on $T^2$ with periodic spin structure, which is \begin{equation}
Z=\tr_{\cH} (-1)^F q^{(H+P)/2} \bar q^{(H-P)/2},\label{foo}
\end{equation} where $H$ is the Hamiltonian, $P$ is the momentum along the spatial $S^1$,\footnote{%
Our normalization of $P$ is such that $e^{i\theta P}$ generates a rotation by $\theta$.
We then normalize $H$ so that the massless excitations satisfy $H=|P|$.
}
and $q=e^{2\pi i\tau}$ specifies the complex structure of the torus.
As $H-P=Q^2$ where $Q$ is the supercharge,
only the states with $H=P$ contribute, making $Z$ independent of $\bar \tau$.
This allows us to rewrite $I(\tau):=Z$ as
\begin{align}
I(\tau)  &= \tr_{\cH} (-1)^F q^P e^{-\beta Q^2} \nonumber \\
&=\sum_k q^k \tr_{\cH_k} (-1)^F e^{-\beta Q^2}. \label{bar0}
\end{align}
where $\cH_k$ are the eigenspaces of the momentum operator $P$ with eigenvalue $k$.

The expression  \eqref{bar0} makes it clear
that $I(\tau)$ is the generating function of the ordinary Witten index
at each eigenspace of $P$,
and the expression \eqref{foo} makes it clear that $Z=I(\tau)$ 
should be modular invariant up to a phase fixed by the gravitational anomaly of the theory.
This $I(\tau)$ is known as the elliptic genus of the 2d supersymmetric theory in the physics literature \cite{Witten:1986bf}.

\paragraph{Mod-2 Witten index:}
Coming back to the case of general supersymmetric systems,
there are cases where the ordinary Witten index $I$ vanishes, but where a subtler, $\physZtwo$-valued invariant can be extracted.\footnote{%
Throughout this paper, we use physicists' convention so that $\physZtwo$ is $\{0,1\}$,
and  therefore is \emph{not} the ring of $2$-adic integers. 
We also use the notation $\bZ/2=\{0,1\}$ more common in modern mathematical literature interchangeably. 
}
This is when the system is invariant under the time-reversal operator $T$
such that \begin{equation}
T^2=1, \quad
T(-1)^F=-(-1)^FT,\quad
Q(-1)^F=-(-1)^FQ,\quad
TQ=-QT.\label{bosh}
\end{equation}
We give a detailed explanation of this algebra in Appendix~\ref{sec:cliff}.

The irreducible representation of this algebra is two-dimensional when $Q^2=0$ with the structure
\begin{equation}
\ket + \stackrel{T}\longleftrightarrow \ket -
\end{equation} 
while it is four-dimensional when $Q^2=E>0$ with the structure \begin{equation}
\begin{array}{r@{}ccc@{}l}
&\ket + &\stackrel{T}\longleftrightarrow & \ket -& \\
\text{\footnotesize $Q$}& \updownarrow & & \updownarrow & \text{\footnotesize $Q$} \\
&\ket{-'} &\stackrel{T}\longleftrightarrow & \ket{+'}
\end{array}.
\end{equation}
Here, $\pm$ inside a ket denotes the eigenvalue of $(-1)^F$.
The two states $\ket{-}$ and $\ket{-'}$ cannot be the same by the following reason. 
If they were the same, we would have $TQ \ket{+} = c \ket{+}$ for a constant $c$.
However, the anti-linearity of $T$ implies that $  (TQ)^2 \ket{+} = |c|^2 \ket{+}$, while the above algebra implies that
$(TQ)^2\ket{+}=-Q^2\ket{+}=-E\ket{+}$ and $-E<0$, a contradiction. 

The structure of irreducible representations makes it clear that the ordinary $\bZ$-valued Witten index vanishes.
It also implies that $\dim \cH_{E=0}$, which is always even, 
change only by a multiple of four under deformations preserving supersymmetry.
Therefore, $\dim \cH_{E=0} /2 $ modulo 2 is a $\physZtwo$-valued invariant of the system,
which we call the mod-2 Witten index.
The supersymmetric quantum mechanics on a spin manifold $M$ of dimension $2$ mod $8$
has this structure, and the mod-2 Witten index of this system 
equals the mod-2 index of the Dirac operator on $M$, almost by definition.\footnote{%
For a different derivation of the mod-2 index, see e.g.~\cite[Sec.2.3.1]{Gaiotto:2019gef}. We discuss the details in Appendix~\ref{sec:cliff}. 
}

\paragraph{Mod-2 elliptic genus:}
Let us now come back to the case of 2d supersymmetric theories.
The CPT transformation\footnote{
The usual CPT theorem (or more precisely the CRT theorem where R is a reflection of a single space coordinate) 
is proved in flat Minkowski spacetimes $\bR^{d-1,1}$ by using the fact that the Wick rotated 
Lorentz group $\mathrm{SO}(d)$ (or its spin cover) is connected~\cite{Streater:1989vi}. 
When we compactify the space, the full Lorentz symmetry $\mathrm{SO}(d)$ is not available and the situation becomes subtle. 
When there is no gravitational anomaly,
we may assume axioms of functorial QFT and the CRT transformation may be constructed within the axioms.
(For instance, see \cite[Example~2.24]{Yonekura:2018ufj} for the construction of time-reversal symmetry in topological QFT which
can also be modified to construct CRT.)
When gravitational anomalies exist, unmodified axioms of functorial QFT fail and the CRT will suffer from anomalies.
}
provides a time-reversal operator $T$ acting on the eigenspace $\cH_{k}$ of the momentum operator $P$ appearing in the expression \eqref{bar0}.
%satisfying $TQ=-QT$.

The algebra formed by $T$ and $(-1)^F$ is known to be determined by the gravitational anomaly of the 2d theory, which is specified by its anomaly polynomial $n p_1/48$,
where $p_1$ is the first Pontryagin class and $n\in\bZ$ equals $2(c_R-c_L)$ when the theory is conformal in our normalization.
When $n\equiv 2$ mod 8, the algebra is given by \eqref{bosh} as explained in Appendix~\ref{sec:cliff} (see also \cite{Delmastro:2021xox}).
As such, the ordinary Witten index of each $\cH_k$ vanishes, making the ordinary elliptic genus to vanish.
We can still form the generating function \begin{equation}
I_2(\tau):=\sum_k q^k  \frac{\dim{\cH_{k,E=0}}}2 \qquad ( n \equiv 2 \bmod 8),\label{qoo}
\end{equation} of the mod-2 Witten index of each $\cH_k$,
where the coefficients are now valued in $\physZtwo$.
We can call this invariant the mod-2 elliptic genus of the theory.

We can extend the definition of the mod-2 elliptic genus to the case $n\equiv 1$ mod 8.
When $n$ is odd, $(-1)^F$ is ill-defined (see Appendix~\ref{sec:cliff} for the precise meaning of this statement) and the algebra is
\beq
T^2=1, \quad TQ=-QT.
\eeq
We can define the mod-2 elliptic genus by the formula\footnote{
When $n$ is odd, there is a known subtlety in the meaning of $\dim{\cH}$.  
The definition of $\dim{\cH} \in \mathbb{Z}$ used in \eqref{qoo2} is such that the path integral computation for $\tr_{\cH}  e^{-\beta H}$
gives the ``dimension'' $\sqrt{2} \dim{\cH}$.
}
\beq
I_2(\tau):=\sum_k q^k  \dim{\cH_{k,E=0}} \qquad ( n \equiv 1 \bmod 8).\label{qoo2}
\eeq
%a proper quantization of the system on $S^1$ with a periodic spin structure requires us to introduce an additional Majorana-Weyl fermion zero mode $\psi$ to act on the state space.
%With this understanding, there is the action of the algebra \eqref{bosh} on the Hilbert space $\cH=\bigoplus_k \cH_k$ when $n\equiv 1$ mod 8,
%and we define the mod-2 elliptic genus using the same formula \eqref{qoo}.

One major difference of the mod-2 elliptic genus from the ordinary elliptic genus is that the modular property of \eqref{qoo} and \eqref{qoo2} is no longer clear,
due to two closely related reasons. 
Firstly, we usually understand the modular property of a modular function in the context of complex analysis, and the formal power series with $\physZtwo$ coefficients do not fit well in this viewpoint.\footnote{%
It is unclear to the authors whether the theory of $p$-adic modular forms is relevant here or not.
}
Secondly, the interpretation of $I(\tau)$ as the partition function of the theory on $T^2$ with a complex structure $\tau$ was crucial as the physical reason why it had a modular property.
The expression \eqref{qoo} and \eqref{qoo2} of $I_2(\tau)$ no longer has that interpretation, 
preventing us from gaining an understanding from this approach.

The aim of this paper is then to show, by indirect means and under some assumptions,
that $I_2(\tau)$ is still, roughly-speaking, given by a mod-2 reduction of a modular function with integral $q$-expansion coefficients,
in a sense which we make precise below.

\paragraph{Remarks on notations and conventions:}
In this paper, we mean by SQFT 2d supersymmetric quantum field theories with \Nequals{(0,1)} supersymmetry, unless otherwise explicitly mentioned.

As remarked in a previous footnote, we use $\bZ/2 =\bZ_2$.
We will sometimes use generalized cohomology theories like $\KO$-theory, especially in the mathematical part of the paper. 
For a generalized cohomology theory or spectrum $E$,
we have $\pi_n E=E_n(pt)=E^{-n}(pt)$ where $\pi_n$ is the $n$-th homotopy group of the spectrum, $pt$ is a point, and $E_\bullet$ and $E^\bullet$
are homology and cohomology theories associated to $E$, respectively. 
For an abelian group $\mathbb{A}$, the notations $\mathbb{A} [x]$, 
$\mathbb{A}[[x]]$ and $\mathbb{A}((x))$
mean the abelian groups of polynomials, formal power series, and Laurent series of variable $x$ with $\mathbb{A}$ coefficients, respectively.

We use the identifications for $\KO$-theory given by
$\KO_{1+8k}(pt) \simeq  \bZ/2$, $\KO_{2+8k}(pt) \simeq \bZ/2$, $\KO_{8k}(pt) \simeq \bZ$ and $\KO_{4+8k}(pt) \simeq \bZ$.
The last isomorphism $\KO_{4+8k}(pt) \simeq \bZ$ requires care, however. For instance, the Witten index defined as in \eqref{eq:Witten}
for $n \equiv 4 \bmod 8$ is always a multiple of 2 by Kramers degeneracy as explained in Appendix~\ref{sec:cliff},
and hence takes values in $2\bZ$. Then we need to divide the index by 2 for the identification $\KO_{4+8k}(pt) \simeq \bZ$.
Therefore, we will use the convention that the index for $n \equiv 4 \bmod 8$ is divided by 2.
From now on we define
\beq
I(\tau) &= \kappa \tr_{\cH} (-1)^F q^{(H+P)/2} \bar q^{(H-P)/2} = \kappa \sum_k q^k \tr_{\cH_k} (-1)^F e^{-\beta Q^2}, \nonumber \\
 \kappa &= \left\{ \begin{array}{cc} 1 & (n \equiv 0 \bmod 8), \\   1/2 & (n \equiv 4 \bmod 8). \end{array} \right. \label{bar}
\eeq

\paragraph{Organization of the paper:} 
The rest of the paper is organized as follows.
In Sec.~\ref{statements}, 
we make precise statements of two results we obtain in this paper.
To motivate these statements,
we study the mod-2 elliptic genus of \Nequals{(0,1)} supersymmetric sigma models.

In Sec.~\ref{yonekura},
we give physical and mathematical derivations of the first result.
Namely, the mod-2 elliptic genus of a theory with gravitational anomaly $n=8k+2$  is given in terms of the mod-2 reduction of an integral modular form of weight $4k+2$,
under the assumption that the theory times the \Nequals{(0,1)} sigma model on $S^1$ with periodic spin structure is `null-bordant in the space of \Nequals{(0,1)} supersymmetric quantum field theories'; this concept was introduced 
in \cite{Gaiotto:2019asa,Gaiotto:2019gef,Johnson-Freyd:2020itv} and utilized in \cite{Yonekura:2022reu}.

In Sec.~\ref{yamashita},
we give a mathematical derivation of the second result.
Namely, we provide a characterization of the image of the homomorphism $\TMF_n(pt) \to \KO_n((q))(pt) \simeq \bZ/2((q))$
when $n=8k+1$ or $n=8k+2$.
This turns out to be an easy corollary of the detailed description of $\TMF_n(pt)$ in \cite{BrunerRognes}.
Translating this mathematical statement into the information on the mod-2 elliptic genus,
assuming the validity of the Stolz-Teichner conjecture \cite{StolzTeichner1,StolzTeichner2},
we find that the mod-2 elliptic genus of a theory with gravitational anomaly $n=8k+1$ or $8k+2$ is given by a mod-2 reduction of an integral modular form of weight $4k$.
Note the difference in the weight of the modular forms, $4k$ vs.~$4k+2$, between
the first result and the second.
We check the compatibility, again by using the detailed data provided in \cite{BrunerRognes}.

\paragraph{Previous works and further directions:}
Before proceeding, let us briefly discuss previous works and further directions.
The authors consider the mod-2 elliptic genus a natural combination of the mod-2 index and the ordinary elliptic genus,
both in the context of mathematics and physics.
Because of this, it was somewhat surprising to the authors that they could only find two existing references, namely \cite{Ochanine1991,Liu1992}, on this subject.
Both are papers in pure mathematics,
where the mod-2 version of the Ochanine genus (i.e.~the elliptic genus of \Nequals{(1,1)} supersymmetric sigma models on spin manifolds) was studied.

As for future directions, we should mention that we considered only the most basic case
of the mod-2 version of the supersymmetric indices in two dimensions.
For example, in the presence of a global symmetry group $G$, 
mod-2 indices can be generalized by counting the number of specific irreducible representations of $G$;
see \eqref{eq:KOG} in Appendix~\ref{sec:cliff} when $G$ is anomaly free after reduction to 1d.
Also, in principle, higher-dimensional superconformal indices of $d$-dimensional theories 
on manifolds of the form $S^1 \times M_{d-1}$ for a $(d-1)$-manifold $M_{d-1}$
can have mod-2 versions. It would be interesting to explore these issues further.
%The authors would like to encourage the readers to explore these issues further.

\subsection{Introduction for mathematicians}\label{sec:intro_math}

One of the important results in the theory of topological modular forms is the determination,
originally announced in \cite{Hopkins2002} and detailed e.g.~in\cite{BrunerRognes},
of the image of the standard homomorphism
\begin{equation}
\pi_n\TMF \to \MF_{n/2}
\end{equation}
where
\begin{equation}
\MF_\bullet=\bZ[c_4,c_6,\Delta,\Delta^{-1}]/(c_4^3-c_6^2-1728\Delta)
\end{equation}
is the ring of weakly-holomorphic integral modular forms,
which we identify with its image in $\bZ((q))$ after the $q$-expansion.
This map is non-trivial only when $n=8k$ or $8k+4$,
and is known to agree with the map
\begin{equation}
\varphi:\pi_n\TMF \to \pi_n\KO((q)) \simeq \bZ((q))
\end{equation} 
which is the homomorphism on the homotopy groups induced by 
the morphism at the level of spectra, \begin{equation}
\varphi: \TMF\to \KO((q)),
\end{equation} whose construction is detailed e.g.~in \cite[Appendix A]{HillLawson}.

The aim of this paper is to study the analogous question when $n=8k+1$ or $n=8k+2$.
Namely, we characterize the image of the homomorphism \begin{equation}
\vphi:\pi_n\TMF \to \pi_n\KO((q)) \simeq \bZ/2((q))
\end{equation}
when $n=8k+1$ or $8k+2$,
in terms of mod-2 reductions of integral modular forms.

Below, we use the standard convention that $\eta$ denotes the generator of $\pi_1(S)=\bZ/2$,
i.e.~the class of $S^1$ with the standard framing.
We also follow the standard abuse of notation that $\eta$ also denotes its image in various spectra we use, $\pi_1 M\spin$, $\pi_1M\mathrm{String}$, $\pi_1\TMF$, and $\pi_1\KO$.
In particular, $\eta$ and $\eta^2$ generate $\pi_1\KO\simeq \bZ/2$ and $\pi_2\KO\simeq \bZ/2$ respectively.
It also turns out $\vphi(\eta)\in \pi_1\KO((q))$ agrees with $\eta\in\pi_1\KO \subset \pi_1\KO((q))$.

\textcolor{black}{We phrase our results in terms of the mod-2 reductions of modular forms. Namely,
we consider $\MF_n\subset \bZ((q))$ using the $q$-expansion, and call the image under \begin{equation}
\MF_n \subset \bZ((q)) \xrightarrow{\mod 2} \bZ/2((q)) 
\end{equation} the mod-two reductions of $\MF_n$.
The right hand side can then be identified canonically with $\pi_{8k+2}\KO((q))$.
This terminology follows that used in \cite{Ochanine1991,Liu1992}.
As $c_4\equiv c_6\equiv 1$ mod 2, the mod-two reductions of $\MF_n$ can all be written in terms of mod-two reductions of $j=(c_4)^3/\Delta$.
A short computation reveals that \begin{equation}
\text{mod-two reductions of $\MF_n$}  = 
\begin{cases}
(\bZ/2) j^{-\lfloor \frac k3\rfloor} [j] & \text{when $n=4k$,} \\
(\bZ/2) j^{-\lfloor \frac{k-1}3\rfloor} [j] & \text{when $n=4k+2$.} 
\end{cases}
\label{j}
\end{equation}
}
Two statements we prove are the following:
\begin{claim}
\textbf{First statement:}\\
\MFirst
\end{claim}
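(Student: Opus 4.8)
The plan is to use the hypothesis $\eta x=0$ (for $x\in A_{8k+2}$) to lift $x$ along the $\eta$-cofiber sequence of $\TMF$, to transport the lift to $\mathrm{KU}((q))$ via Wood's equivalence $\KO\wedge C\eta\simeq\mathrm{KU}$, and to show that the resulting element of $\pi_{8k+4}\mathrm{KU}((q))=\bZ((q))$ is forced to be the $q$-expansion of an \emph{integral} modular form of weight $4k+2$, whose mod-$2$ reduction is then $\varphi(x)$.

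Concretely, first I would write $C\eta=\mathbb S\cup_\eta e^2$ and $\TMF/\eta:=\TMF\wedge C\eta$, and use the cofiber sequence $\Sigma\TMF\xrightarrow{\eta}\TMF\xrightarrow{j}\TMF/\eta\xrightarrow{\partial}\Sigma^2\TMF$. Its long exact sequence contains $\pi_{8k+4}(\TMF/\eta)\xrightarrow{\partial}\pi_{8k+2}\TMF\xrightarrow{\eta}\pi_{8k+3}\TMF$, so every $x\in A_{8k+2}=\ker\eta$ equals $\partial(\tilde x)$ for some $\tilde x\in\pi_{8k+4}(\TMF/\eta)$. Smashing $\varphi$ with $C\eta$ and using Wood's theorem $\KO\wedge C\eta\simeq\mathrm{KU}$ I get $\varphi\wedge C\eta\colon\TMF/\eta\to\KO((q))\wedge C\eta\simeq\mathrm{KU}((q))$; write $\bar\varphi(\tilde x)\in\pi_{8k+4}\mathrm{KU}((q))=\bZ((q))$ for the image of $\tilde x$. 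Under the Wood identification the connecting map $\KO((q))\wedge C\eta\to\Sigma^2\KO((q))$ fits into $\Sigma\KO((q))\xrightarrow{\eta}\KO((q))\xrightarrow{c}\mathrm{KU}((q))\xrightarrow{\partial'}\Sigma^2\KO((q))$; on $\pi_{8k+4}$ this is a map $\bZ((q))\to\pi_{8k+2}\KO((q))\cong\bZ/2((q))$, and since $\pi_{8k+3}\KO((q))=0$ and $c$ is multiplication by $2$ in degree $8k+4$, it is exactly reduction mod $2$. Naturality of the two cofiber sequences then gives $\partial'\!\big(\bar\varphi(\tilde x)\big)=\varphi\big(\partial(\tilde x)\big)=\varphi(x)$, i.e.\ $\varphi(x)=\bar\varphi(\tilde x)\bmod 2$ in $\bZ/2((q))$. (The factor-of-$2$ subtlety of $\KO_{4+8k}(\pt)$ discussed in Section~\ref{sec:intro_phys} only rescales the relevant lattice by a unit or by $2$ and is harmless below.)

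Next I would show $\bar\varphi(\tilde x)\in\MF_{4k+2}\subset\bZ((q))$. Rationally $\eta$ acts by zero, so $\TMF/\eta\otimes\mathbb Q\simeq\TMF\otimes\mathbb Q\vee\Sigma^2(\TMF\otimes\mathbb Q)$ and hence $\pi_{8k+4}(\TMF/\eta)\otimes\mathbb Q\cong(\MF_{4k+2}\otimes\mathbb Q)\oplus(\MF_{4k+1}\otimes\mathbb Q)=\MF_{4k+2}\otimes\mathbb Q$, the second summand vanishing because there are no nonzero modular forms of odd weight; under this identification $j$ is a rational isomorphism onto $\MF_{4k+2}\otimes\mathbb Q$. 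Since $\bar\varphi\circ j=c\circ\varphi$, and $c\circ\varphi$ on $\pi_{8k+4}\TMF$ is rationally the $q$-expansion of weight-$(4k+2)$ modular forms (this is the classical identification of $\varphi$ with $\pi_*\TMF\to\MF_*$ recalled in Section~\ref{sec:intro_math}), the map $\bar\varphi$ on $\pi_{8k+4}(\TMF/\eta)$ is, rationally, the $q$-expansion $\MF_{4k+2}\otimes\mathbb Q\to\mathbb Q((q))$. Therefore $\bar\varphi(\tilde x)$ is at once an element of $\bZ((q))$ and the $q$-expansion of a rational modular form of weight $4k+2$, so by the $q$-expansion principle, $\MF_{4k+2}=(\MF_{4k+2}\otimes\mathbb Q)\cap\bZ((q))$ inside $\mathbb Q((q))$, it lies in $\MF_{4k+2}$. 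Combined with the previous paragraph, $\varphi(x)=\bar\varphi(\tilde x)\bmod 2$ is the mod-$2$ reduction of an integral modular form of weight $4k+2$; reducing the monomial basis $c_4^{a}c_6^{b}\Delta^{c}$ of $\MF_{4k+2}$ modulo $2$ and using $c_4\equiv c_6\equiv1$ recovers the explicit list of $\Delta^{j}$ with $j\le\lfloor\tfrac{k-1}{3}\rfloor$.

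The step I expect to be the main obstacle is pinning down, in the second paragraph, that Wood's equivalence $\KO((q))\wedge C\eta\simeq\mathrm{KU}((q))$ intertwines $\bar\varphi$ with the genuine $q$-expansion on the torsion-free part---morally the assertion that every map in sight is ``restriction to the Tate curve'', but requiring one to track the construction of $\varphi$ (e.g.\ as in \cite{HillLawson}) through the Wood splitting and to confirm the $\KO$-normalization in degrees $\equiv 4\bmod 8$ does not spoil integrality. It is worth stressing that the passage from the naive bound ``weight $4k$'' (the second statement, valid on all of $\pi_{8k+2}\TMF$) to ``weight $4k+2$'' on the subgroup $A_{8k+2}$ is produced entirely by the vanishing of odd-weight modular forms used above. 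Finally, as an independent check one can read the claim off the explicit description of $\pi_*\TMF$ in \cite{BrunerRognes}: the $576$-fold periodicity ($\Delta^{24}$) reduces it to finitely many degrees, in each of which one identifies $\ker(\eta\colon\pi_{8k+2}\TMF\to\pi_{8k+3}\TMF)$ and verifies that $\varphi$ carries it into the mod-$2$ reduction of $\MF_{4k+2}$.
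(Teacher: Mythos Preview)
Your proof is correct and shares the paper's overall architecture---lift $x\in A_{8k+2}$ to $\tilde x\in\pi_{8k+4}(\TMF/\eta)$ along the $\eta$-cofiber sequence, identify the $\KO$-side connecting map with mod-$2$ reduction, and show $\bar\varphi(\tilde x)\in\MF_{4k+2}$---but the two key sub-claims are established by different routes. For the connecting map, you invoke Wood's equivalence $\KO\wedge C\eta\simeq\mathrm{KU}$ and read it off the long exact sequence of $\pi_*\KO$ and $\pi_*\mathrm{KU}$; the paper instead proves this geometrically (Lemma~\ref{lem_KO/eta}) by lifting to $M\mathrm{Spin}/[S^1]$ via the Atiyah--Bott--Shapiro map and using the explicit half-$K3$ null bordism of $T^3$. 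For the inclusion $\bar\varphi(\tilde x)\in\MF_{4k+2}$, you rationalize, use $\MF_{4k+1}=0$ to kill the $\Sigma^2$-summand, and apply the $q$-expansion principle $\MF_{4k+2}=(\MF_{4k+2}\otimes\mathbb{Q})\cap\bZ((q))$; the paper instead uses the multiplication-by-$2$ map $\TMF/\eta\to\TMF/(2\eta)\simeq\TMF$ (available since $2\eta=0$), shows via Lemma~\ref{lem_2multi} that it is an isomorphism on the $\KO((q))$-side in degree $8k+4$, and thereby reduces to the known fact $\varphi(\pi_{8k+4}\TMF)\subset\MF_{4k+2}$. Your route is slightly slicker and avoids spin bordism entirely; the paper's stays closer to the physics derivation and to Stong's original proposition, and its $2\eta=0$ trick sidesteps the normalization worry you flag---which is in any case harmless, since all you actually use is that $c\circ\varphi$ rationally factors through the $q$-expansion of $\MF_{4k+2}\otimes\mathbb{Q}$, and that follows from naturality without ever identifying Wood's equivalence explicitly.
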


\begin{claim}
\textbf{Second statement:}\\
\MSecond
\end{claim}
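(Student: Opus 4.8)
The plan is to split an arbitrary class $\xi\in\pi_{8k+2}\TMF$ as $\xi=\eta^2 x+a$ with $x\in\pi_{8k}\TMF$ and $a\in A_n:=\ker\bigl(\eta\colon\pi_{8k+2}\TMF\to\pi_{8k+3}\TMF\bigr)$, and to control the two summands separately. For the first summand: since $\vphi$ is multiplicative with $\vphi(\eta)=\eta$, one has $\vphi(\eta^2 x)=\eta^2\vphi(x)$. In $\pi_\ast\KO=\bZ[\eta,\alpha,\beta^{\pm1}]/(2\eta,\eta^3,\eta\alpha,\alpha^2-4\beta)$ multiplication by $\eta^2$ sends the generator $\beta^k$ of $\pi_{8k}\KO\simeq\bZ$ to the generator $\eta^2\beta^k$ of $\pi_{8k+2}\KO\simeq\bZ/2$ and annihilates $2\beta^k$, so the induced map $\eta^2\colon\pi_{8k}\KO((q))\to\pi_{8k+2}\KO((q))$, i.e.\ $\bZ((q))\to\bZ/2((q))$, is coefficientwise reduction modulo $2$. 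Since $\vphi(x)\in\MF_{4k}$ by the result recalled in Section~\ref{sec:intro_math}, $\vphi(\eta^2 x)$ is the mod-$2$ reduction of a weight-$4k$ integral modular form.

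For the second summand I would invoke the First statement, which gives $\vphi(A_n)\subseteq\overline{\MF_{4k+2}}$ (writing $\overline{(\cdot)}$ for mod-$2$ reduction inside $\bZ/2((q))$), together with the observation $\overline{\MF_{4k+2}}\subseteq\overline{\MF_{4k}}$. For the latter: modulo $2$ one has $c_4\equiv c_6\equiv 1$ and $1728\equiv 0$, so every monomial $c_4^a c_6^b\Delta^j$ reduces to $\Delta^j\bmod 2$; in $\MF_{4k}$ the constraint $4a+6b+12j=4k$ forces $b$ even, and with $b=2c$ it reads $a+3(c+j)=k$ with $a,c\ge 0$, so the exponents that occur are exactly $j\le\lfloor k/3\rfloor$, whence $\overline{\MF_{4k}}$ is the $\bZ/2$-span of $\{\Delta^j\bmod 2:j\le\lfloor k/3\rfloor\}$; the same count at weight $4k+2$ (now $b$ odd) gives the span of $\{\Delta^j\bmod 2:j\le\lfloor(k-1)/3\rfloor\}$, and $\lfloor(k-1)/3\rfloor\le\lfloor k/3\rfloor$. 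Hence $\vphi(A_n)\subseteq\overline{\MF_{4k}}$, and together with the first summand any $\xi=\eta^2 x+a$ satisfies $\vphi(\xi)=\vphi(\eta^2 x)+\vphi(a)\in\overline{\MF_{4k}}$, which is the desired conclusion.

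What remains — and what I expect to be the main obstacle — is to justify the decomposition $\pi_{8k+2}\TMF=\eta^2\cdot\pi_{8k}\TMF+A_n$. By the first isomorphism theorem applied to $\eta\colon\pi_{8k+2}\TMF\to\pi_{8k+3}\TMF$ this is equivalent to the claim that every element of $\pi_{8k+3}\TMF$ of the form $\eta\xi$ is already of the form $\eta^3 x=\eta(\eta^2 x)$ for some $x\in\pi_{8k}\TMF$. This is exactly the sort of statement made transparent by the complete description of $\pi_\ast\TMF$, its $2$- and $3$-torsion, and its $\eta$- and $\nu$-module structure in \cite{BrunerRognes}: using the periodicity of $\TMF$ (the invertible class $\Delta^{24}\in\pi_{576}\TMF$) one reduces to finitely many residues of $k$, and in each one reads off that, modulo $A_n$, the group $\pi_{8k+2}\TMF$ is generated by $\eta^2$-multiples of modular classes in $\pi_{8k}\TMF$. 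If some residue were to produce an exotic class of $\pi_{8k+1}\TMF$ whose $\eta$-multiple lies outside $\eta^2\pi_{8k}\TMF+A_n$, I would instead use the degree-$(8k+1)$ analogue of the First statement, $\vphi(\pi_{8k+1}\TMF)\subseteq\overline{\MF_{4k}}$, together with the fact that $\eta\colon\pi_{8k+1}\KO((q))\to\pi_{8k+2}\KO((q))$ is an isomorphism. Thus the only non-formal work is the bookkeeping against \cite{BrunerRognes} — verifying that the exotic torsion surviving multiplication by $\eta$ contributes nothing outside $\overline{\MF_{4k}}$ — which is why the statement is billed as an easy corollary of that computation.
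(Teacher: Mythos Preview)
Your proposed decomposition $\pi_{8k+2}\TMF=\eta^2\cdot\pi_{8k}\TMF+A_n$ is not merely unproven---it is false. Take $k=3$ and $\xi=\eta\eta_1\in\pi_{26}\TMF$, where $\eta_1\in\pi_{25}\TMF$ is the class with $\vphi(\eta_1)=\eta\Delta$. If $\eta\eta_1=\eta^2 x+a$ with $x\in\pi_{24}\TMF$ and $a\in A_{26}$, apply $\vphi$. On the left one gets $\Delta\bmod 2$. On the right, $\vphi(\eta^2 x)$ is the mod-$2$ reduction of $\vphi(x)\in\vphi(\pi_{24}\TMF)$; but that image is the index-$24$ subgroup $\bZ c_4^3+24\bZ\Delta\subset\MF_{12}$, whose mod-$2$ reduction is $\{0,1\}$. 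And $\vphi(a)\in\overline{\MF_{14}}=\{0,1\}$ by the First statement, since $\MF_{14}=\bZ c_4^2c_6$. Hence $\vphi(\eta^2 x+a)\in\{0,1\}$, whereas $\Delta\bmod 2=q+q^9+\cdots\notin\{0,1\}$. This is exactly the phenomenon the paper flags in a Remark: $\eta^2\Delta$ is in the image of $\vphi$ on $\pi_{26}\TMF$ even though $\Delta$ is not in $\vphi(\pi_{24}\TMF)$.

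Your fallback is circular. What you call ``the degree-$(8k+1)$ analogue of the First statement, $\vphi(\pi_{8k+1}\TMF)\subseteq\overline{\MF_{4k}}$'' is not the First statement at all---the First statement constrains only the kernel $A_n$ of multiplication by $\eta$. The unrestricted inclusion you invoke is precisely the degree-$(8k+1)$ case of the Second statement itself, and the paper derives degree $8k+1$ \emph{from} degree $8k+2$, not conversely.

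The paper's two proofs sidestep this. The direct one reads off from \cite[Theorem~9.26]{BrunerRognes} that, modulo $B$-power torsion (which $\vphi$ kills), $\pi_{8k+2}\tmf$ is spanned by products $\eta_i\eta_j$ and $\eta^2 B_\ell$, and uses the relation $\eta_k B=\eta B_k$ to obtain $\vphi(\eta_k)=\eta\Delta^k$. The alternative proof is closer in spirit to what you attempt, but replaces your additive splitting by a multiplicative trick: one multiplies $\xi$ by $e_8\in\pi_{-16}\TMF$ (with $\vphi(e_8)=c_4/\Delta$), uses a lemma that $e_8$ annihilates all of $\pi_{8k+3}\TMF$ so that $e_8\xi\in A_{8(k-2)+2}$ automatically, applies the First statement there, and then undoes the multiplication since $\vphi(e_8)$ is a unit in $\bZ/2((q))$.
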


Here, we concentrated on the case $n=8k+2$, since the case $n=8k+1$ can be easily reduced to the case $n=8k+2$ by multiplying by $\eta$.
The proof of the first statement is given in Sec.~\ref{subsec_proof_first_math},
and that of the second statement can be found in Sec.~\ref{yamashita}.
Note the difference in the degree of integral modular forms, $4k+2$ vs.~$4k$,
between the first and the second statement.
\textcolor{black}{This leads to the difference in the possible negative powers of $j$ appearing in the results,}
and this difference allows us to detect some elements in $\pi_n\TMF\setminus A_n$.
\textcolor{black}{For example, $\eta^2\in \pi_2\TMF$ is sent to $\eta^2 \in \pi_2\KO((q))$ which is $1\in \bZ/2((q))$.
Now, the mod-2 reduction of $\MF_0$ is $\bZ/2[j]$, 
whereas the mod-2 reduction of $\MF_2$ is $(\bZ/2) j[j]$.
$1$ is clearly in the former while it is not in the latter.
And indeed, $\eta\eta^2\neq 0$ in $\pi_3\TMF$.}

\textcolor{black}{We remark that, after $2$-adic completion, the morphism $\psi \colon \TMF \to \KO((q))$ is well-known to factor through the $K(1)$-localization of $\TMF$,
whose homotopy groups satisfy $\pi_n L_{K(1)}\TMF=\pi_n\KO[j^{-1},j]$ \cite{LauresK(1)}. 
This fact, however, does not directly imply our results, since the $K(1)$-localization inverts $c_4$, whereas the information without inverting $c_4$ is essential for our statements,
which constrain the maximal power of $j^{-1}$ appearing in the image.}

Our proof of the first statement is homotopy theoretic and  
can be considered as the $\TMF$ version 
of a proposition on $\KO$ by Stong \cite[Proposition on p.343]{StongTextbook};
\textcolor{black}{we also note that the results of \cite{Ochanine1991,Liu1992} were also the Ochanine-genus version of the same proposition of Stong.}
The proof of the second statement, at present, relies on the detailed structure of $\pi_\bullet\TMF$ given in \cite{BrunerRognes} and is computational.
It would be desirable to have a more conceptual proof of the second statement.
We will also provide an outline for such a proof there, with the still missing parts clearly demarcated.
Before proceeding, we note that in this paper we also use $\bZ_2$ to mean $\bZ/2$, i.e.~the ring of integers modulo 2, not in the sense of the ring of 2-adic integers.

\section{Preliminaries}
\label{statements}

\subsection{Preparations}
We need to start with some preparations. 
Let us come back to the ordinary elliptic genus $I_\sT(\tau)$, defined in \eqref{bar} for the theory $\sT$ under consideration.
When the gravitational anomaly of $\sT$ is $n\in \bZ$, it has the transformation law 
\textcolor{black}{\begin{equation}
I_\sT(\tau+1) = e^{-2\pi i n/24} I_\sT(\tau),\qquad I_\sT(-\frac1{\tau})= e^{-2\pi i n/8}I_\sT(\tau)\label{mod-trans}
\end{equation}} under the generators of the modular transformations.
In particular, the momentum eigenvalues $k$ appearing in \eqref{bar} satisfy $k\in -\frac n{24} +\bZ$.\footnote{%
%The phases in \eqref{mod-trans} can be found as follows. 
%The transformation under $T:\tau\mapsto \tau+1$ simply comes from the non-integral part of the eigenvalue of $L_0-\overline{L_0}$.
%As for the transformation under $S:\tau\mapsto -1/\tau$,
%note that the anomalous phases under $S$ and $T$ are almost by definition the phases by which these transformations act on the one-dimensional Hilbert space of the corresponding anomaly invertible theory in 2+1 dimensions. 
%In our case, $S$ and $T$ generate the metaplectic group, the double cover of $SL(2,\mathbb{Z})$. 
%This group acts on one-dimensional Hilbert spaces via its Abelianization $\mathbb{Z}/24$, in which the equality $S=T^3$ holds, determining the phase under the transformation $S$.
\textcolor{black}{
The phases in \eqref{mod-trans} may be found as follows;
see also the discussions in \cite[Sec.~3.5]{Seiberg:2018ntt}.
The transformation under $T:\tau\mapsto \tau+1$ simply comes from the non-integral part of the eigenvalues of $P$.
As for the transformation under $S:\tau\mapsto -1/\tau$, first recall that gravitational anomalies only affect the phase of partition functions (or correlation functions), 
and hence there exists a phase $\alpha(\tau) \in {\rm U}(1)$ such that $I_\sT(-1/\tau) = \alpha(\tau) I_\sT(\tau)$. Moreover, $\alpha(\tau)=I_\sT(-1/\tau)/I_\sT(\tau)$ is a holomorphic function of $\tau$ (at least for those $\tau$ for which $I_\sT(\tau) \neq 0$), and such a holomorphic ${\rm U}(1)$ phase must be a constant, $\alpha(\tau) = \alpha$. Let $\beta= e^{-2\pi i n/24}$ be the phase from $T$. 
We regard $S$ and $T$ as diffeomorphisms acting on the spin manifold $T^2$, and they generate the metaplectic group, the double cover of $SL(2,\mathbb{Z})$. 
From the relation $(ST)^3=S^4$, we conclude that $\alpha = \beta^3=e^{-2\pi i n/8}$. This is also consistent with the relation $S^8=1$ which implies $\alpha^8=1$. (We remark that we do not have $S^4=1$ since $S^4$ acts nontrivially on fibers of the spin bundle on $T^2$. We also remark that the above discussions, after some modification, are also valid for correlation functions and hence meaningful even if the partition function is zero, such as the case $n \neq 0 \mod 4$.)
The authors thank A. Ishige for pointing out an error in this equation in a previous version of the paper, leading to the clarification mentioned in this footnote.}
}
As $I_\sT(ia)$ for positive real $a$ should be real, we see that $I_\sT(\tau)$ vanishes unless $n\equiv 0$ or $4$ mod 8.

It is useful to consider, then, the combination \begin{equation}
\varphi_\sT(\tau) := \eta(\tau)^n I_\sT(\tau)
\end{equation} so that it has the transformation law \begin{equation}
\varphi_\sT(\tau+1)=\varphi_\sT(\tau),\qquad \varphi_\sT(-\frac1\tau)=\tau^{n/2} \varphi_\sT(\tau),
\end{equation}
where $\eta(\tau)=q^{1/24}\prod_{n>0}(1-q^n)$ is the Dedekind eta function,
which is known to behave as $\eta(\tau+1)=e^{2\pi i/24} \eta(\tau)$ and $\eta(-1/\tau)=(-i\tau)^{1/2} \eta(\tau)$.
This means that $\varphi_\sT(\tau)$ is a weakly-holomorphic modular form of weight $n/2$.\footnote{``Weakly holomorphic'' means that it is holomorphic 
except at $q=0$ where poles are allowed. }
From now on, we suppress the dependence on $\tau$ or $q$ from the notation 
and write  $\varphi(\sT)$ for $\varphi_\sT(\tau)$, namely
\beq
\varphi(\sT) := \varphi_\sT(\tau).
\eeq
%which we regard as a function of $q=e^{2\pi i \tau}$ from now on and denote by $\varphi(q)$ by an abuse of the notation.
The expression \eqref{bar} of $I(\tau)$ then says that the $q$-expansion coefficients of $\varphi(\sT)$ are all integers,
i.e.~$\varphi(\sT)$ is an integral weakly-holomorphic modular form of weight $n/2$.
Let us call $\varphi(\sT)$ as the Witten genus of the theory,
following the usage of mathematicians.

Here we recall that the ring $\mf$ of integral modular forms  is given by
%\footnote{Notations like $\mathbb{Z}[x,y,\cdots]/f(x,y,\cdots)$ mean
%the ring of polynomials of $x,y,\cdots$ with $\mathbb{Z}$ coefficients, with the equivalence relation generated by $f(x,y,\cdots) \sim 0$. }
\begin{equation}
\mf_{\bullet} = \bZ[c_4,c_6,\Delta]/(c_4^3-c_6^2-1728\Delta),
\end{equation}
where\begin{equation}
c_4=1+240\sum_{n>0} \sigma_3(n) q^n,\quad
c_6=1-504\sum_{n>0}\sigma_5(n)q^n,\quad
\Delta=\eta(\tau)^{24},
\end{equation}
are  the normalized Eisenstein series and the modular discriminant, respectively,
related by $c_4^3-c_6^2 = 1728\Delta$.
Here $\sigma_k(n):=\sum_{d|n} d^k$ is the divisor sum function,\footnote{
In other words, $\sum_{ n\ge 1} \sigma_k(n) q^n = \sum_{\ell \geq 1} \ell^k \frac{q^\ell}{1-q^\ell}$.
}
and the weights of $c_4$, $c_6$ and $\Delta$ are $4,6,12$, respectively.
The ring of weakly-integral modular forms $\MF$ is then obtained by inverting $\Delta$:
\begin{equation}
\MF_\bullet=\mf_\bullet[\Delta^{-1}].
\end{equation}
Our discussion so far can then be summarized by saying that the Witten genus $\varphi(\sT)$ of a theory whose gravitational anomaly is $n\in \bZ$ is an element of $\MF_{n/2}$.
Clearly, this is nonzero only when $n$ is a multiple of four, 
since all the generators of $\MF_\bullet$ have even weights.

It turns out to be convenient, also for the mod-2 elliptic genus \eqref{qoo} and \eqref{qoo2}, to introduce \begin{equation}
\varphi(\sT):=\eta(\tau)^n I_{2, \sT}(\tau),
\end{equation} which now takes values in $\physZtwo((q))$.
Here and below, we use the same symbol $\varphi(\sT)$ both for the $\bZ$-valued Witten genus and $\physZtwo$-valued Witten genus, 
depending on the gravitational anomaly $n\in \bZ$.
It is $\bZ$-valued when $n\equiv 0$ or $4$,
and $\bZ_2$ valued when $n\equiv 1 $ or $2$ mod 8.

Our aim is to characterize this mod-2 Witten genus $\varphi(\sT)$,
in terms of mod-2 reductions of integral modular forms.
For this purpose, it is useful to remember
\begin{equation}
c_4 \equiv c_6 \equiv 1, \qquad
\Delta\equiv \sum_{n\ge 1}  q^{(2n-1)^2} \mod 2.
\end{equation}

\subsection{Elliptic genera of supersymmetric sigma models}

Before proceeding, let us discuss the Witten genera $\varphi(M)$ %and/or $\varphi_2(M)$ 
for the \Nequals{(0,1)} supersymmetric sigma model on an $n$-dimensional manifold $M$.
This model has a scalar field taking values in $M$,
together with its right-moving fermionic superpartner, taking values in the tangent bundle $TM$.
Therefore it has the gravitational anomaly characterized by $n\in \bZ$.
The cancellation of the sigma model anomaly requires that $M$ is equipped 
not only with a spin structure,
but also with a choice of the $B$-field such that the equation $dH=p_1(TM)/2$ 
is solved at the integral level,
where $H$ is the gauge-invariant curvature of the $B$-field.
The necessity of this choice was pointed out in \cite{Witten:1985mj} and more systematically studied
in \cite{Witten:1999eg,Yonekura:2022reu}.
Let us call the entirety of this data as the differential string structure of the manifold,
and a manifold equipped with such a data simply as a string manifold.

The Witten genera can be computed in the ultraviolet using semi-classical methods \cite{Witten:1986bf}.
We simply obtain {
\begin{align}
\varphi(M)  &=  \eta(\tau)^n\ind\left( q^{-n/24}  \bigotimes_{m \geq 1} \left( \bigoplus_{k\ge 0} q^{mk}  \mathrm{Sym}^k TM \right) \right) \nonumber \\
&=  \ind\left(  \bigotimes_{m \geq 1} \left( \bigoplus_{k\ge 0} q^{mk}  \mathrm{Sym}^k (TM \ominus \bR^n) \right) \right)  \label{too}
\end{align} %when $n\equiv 0$ or $4$ mod 8 and
%\begin{align}
%\varphi(M)  &= \eta(\tau)^n \ind_2\left( q^{-n/24}  \bigotimes_{m \geq 1} \left( \bigoplus_{k\ge 0} q^{mk}  \mathrm{Sym}^k TM \right) \right) \nonumber \\
%&=  \ind_2\left(  \bigotimes_{m \geq 1} \left( \bigoplus_{k\ge 0} q^{mk}  \mathrm{Sym}^k (TM \ominus \bR^n) \right) \right) \label{mod2index}
%\end{align} when $n \equiv 1$ or $2$ mod 8
}%
where $\ind(V)$ %and $\ind_2(V)$ 
for a real bundle $V$  are the $\bZ$- or $\physZtwo$-valued %and $\physZtwo$-valued 
indices of Dirac operators on the spin bundle tensored with $V$.\footnote{ $V \ominus W$ for vector bundles $V$ and $W$ means a virtual vector bundle which we may also denote as $V - W$. 
The $k$-th symmetric power $\mathrm{Sym}^k$ acts as
$\mathrm{Sym}^k(V \ominus W) = \bigoplus_{\ell=0}^k \ominus^{\ell} \mathrm{Sym}^{k-\ell}(V) \otimes \wedge^{\ell} (W)$. }
Namely, $\ind$ is $\bZ$-valued when $n\equiv 0$ or $4$ mod 8,
and is $\bZ_2$ valued when $n\equiv 1$ or $2$ mod 8.
Furthermore, the definition of $\ind$ includes the factor $\kappa$ as in \eqref{bar}, that is, we divide the usual index by $2$ when $n = 4 \bmod 8$.
%\beq\label{eq:kappa}
%\kappa = \left\{ \begin{array}{cc} 1 & n \equiv 0 \bmod 8 \\   1/2 & n \equiv 4 \bmod 8 \end{array} \right.
%\eeq
These expressions can be used to make sense of $\varphi(M)$ %and $\varphi_2(M)$ 
of spin manifolds as elements of $\bZ[[q]]$ (when $n\equiv 0, 4$ mod 8)
or  $\physZtwo[[q]]$ (when $n\equiv 1, 2$ mod 8).
%\footnote{$\bZ[[q]]$ (resp.~$\bZ/2[[q]]$) is the ring of Taylor series of $q$ with $\bZ$ (resp.~$\bZ/2$) coefficients. }
%We also note that the reality properties of the spin bundle imply that $\varphi(M)$ vanishes unless $n\equiv 0, 1, 2, 4$ mod 8.%$n\equiv 0$ or $4$ mod 8.
%and $\varphi(q)\in 2\bZ[[q]]$ when $n\equiv 4$ mod 8. 

Using the index theorem, the  expression \eqref{too} when $n\equiv 0, 4$ mod 8 can  be rewritten as \begin{equation}
\varphi(M)= \kappa \int_M \hat A(TM)\prod_{k=1}^\infty \prod_{j=1}^{n}
\frac{(1-q^k)}{ (1-q^k e^{u_j} ) }, 
\end{equation} where $ u_{1,\ldots,n}$ are the Chern roots of $TM_\bC$, and $\kappa$ is as in \eqref{bar}.
In \cite{Zagier1986}, this was further rewritten as
\begin{equation}
\varphi(M) =\kappa \int_M  \prod_{k>0}\exp\left(- \frac{2}{(2k)!}\frac{B_{2k}}{4k} E_{2k}(q)  \left( \frac12 \sum_{j=1}^{n} u_j^{2k} \right) \right)\label{dog}
\end{equation} 
where \begin{equation}
E_{2k}(q)= 1- \frac{4k}{B_{2k}} \sum_{d>0} \sigma_{2k-1}(d) q^d
\end{equation} is the Eisenstein series.
Here $B_{2k}$ is the Bernoulli number which is known to satisfy $4k/B_{2k}\in \bZ$;
this means in particular that $E_{2k}(q)\in \bZ[[q]]$.\footnote{% 
We denoted $E_{4,6}$ also by $c_{4,6}$ in other parts of this paper,
following the usage in the literature on $\TMF$.}
As $E_{2k}$ for $2k\ge 4$ are modular forms of weight $2k$ while $E_2$ is only quasi-modular,
we see that the expression \eqref{dog} is modular when $p_1(TM)$ vanishes rationally
and quasimodular otherwise.
Combining with the fact that it has  integral $q$-expansion coefficients, we can summarize our discussions so far as follows:
\begin{claim}
$\varphi(M)$ of a string manifold $M$ of dimension $n\equiv 0$ or $4$ mod $8$ is in $\mf_{n/2}$, i.e.~an integral modular form of weight $n/2$.
$\varphi(M)$ of a spin manifold $M$ of dimension $n\equiv 0$ or $4$ mod $8$ is in $\bC[E_2,E_4,E_6]\cap\bZ[[q]]$, i.e.~the ring of integral quasimodular forms, again of weight $n/2$.
\end{claim}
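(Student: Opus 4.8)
The plan is to assemble the ingredients already derived in this subsection into a statement about modular forms. Start from the closed formula \eqref{dog} and expand the exponential, keeping only the terms of cohomological degree $n$ since those are the only ones that survive $\int_M$. This writes
\[
\varphi(M) \;=\; \sum c_{(a_k)}\,\prod_{k\ge1} E_{2k}(q)^{a_k},
\]
a \emph{finite} sum over tuples $(a_k)_{k\ge1}$ of non-negative integers with $\sum_k 4k\,a_k = n$, where each coefficient $c_{(a_k)}\in\mathbb{Q}$ is the characteristic number $\int_M \prod_k\bigl(\tfrac12\sum_j u_j^{2k}\bigr)^{a_k}$ times the universal rational constant $\kappa\prod_k\tfrac1{a_k!}\bigl(-\tfrac2{(2k)!}\tfrac{B_{2k}}{4k}\bigr)^{a_k}$. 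A weight count shows that each monomial $\prod_k E_{2k}(q)^{a_k}$ is quasimodular of weight $\sum_k 2k\,a_k = n/2$; since the graded ring of holomorphic modular forms for $\mathrm{SL}_2(\bZ)$ is $\bC[E_4,E_6]$, every $E_{2k}$ with $k\ge2$ is itself a polynomial in $E_4,E_6$, so $\varphi(M)\in\bC[E_2,E_4,E_6]_{n/2}$ for any spin manifold $M$ of dimension $n\equiv0,4\bmod8$.

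For the integrality, invoke \eqref{too}: each $q$-coefficient of $\varphi(M)$ is a $\bZ$-linear combination of indices $\ind(\cdot)$ of the Dirac operator twisted by genuine real bundles, hence an integer by the conventions of Sec.~\ref{sec:intro_phys} — in particular the $\kappa=1/2$ division for $n\equiv4\bmod8$ produces an integer because such twisted Dirac operators carry a quaternionic structure (Appendix~\ref{sec:cliff}). As \eqref{too} has no negative powers of $q$, we conclude $\varphi(M)\in\bZ[[q]]$. Together with the previous paragraph this already proves the spin assertion, since by definition the ring of integral quasimodular forms is $\bC[E_2,E_4,E_6]\cap\bZ[[q]]$ and $\varphi(M)$ lands in its weight-$n/2$ piece.

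For the string assertion, the additional input is that a differential string structure requires a global solution of $dH=p_1(TM)/2$, which forces $p_1(TM)=0$ in $H^4(M;\mathbb{Q})$. Because $\tfrac12\sum_j u_j^2=p_1(TM)$ and the $c_{(a_k)}$ are computed by pairing against $[M]$, every monomial in the displayed sum with $a_1\ge1$ contains a rationally trivial factor and so contributes $0$; hence only $a_1=0$ survives and $\varphi(M)\in\bC[E_4,E_6]_{n/2}=\mf_{n/2}\otimes\bC$. Finally, to descend the coefficient ring from $\mathbb{Q}$ to $\bZ$, appeal to the $q$-expansion principle for level one: a holomorphic modular form over $\mathbb{Q}$ whose $q$-expansion lies in $\bZ[[q]]$ belongs to $\mf=\bZ[c_4,c_6,\Delta]/(c_4^3-c_6^2-1728\Delta)$. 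This gives $\varphi(M)\in\mf_{n/2}$.

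The only step beyond bookkeeping on \eqref{dog} is this last one together with the integrality input: one genuinely uses the quaternionic structure of the twisted Dirac operator to know that the $\kappa=1/2$ normalization in dimension $4\bmod8$ lands in $\bZ$, and one cites rather than reproves the $q$-expansion principle; everything else is degree and weight counting.
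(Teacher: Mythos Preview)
Your argument is correct and follows essentially the same route as the paper: expand \eqref{dog} to place $\varphi(M)$ in $\bC[E_2,E_4,E_6]_{n/2}$, use the rational vanishing of $p_1(TM)$ on a string manifold to kill the $E_2$ contributions, and combine with the integrality of the $q$-expansion coming from \eqref{too}. You are simply more explicit than the paper about two points the text leaves implicit---that $\tfrac12\sum_j u_j^2$ is the class carrying $p_1$, and that passing from $\bC[E_4,E_6]\cap\bZ[[q]]$ to $\mf_{n/2}$ uses the $q$-expansion principle.
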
 

\subsection{Mod-2 elliptic genera of supersymmetric sigma models}\label{subsec_preliminary_mod2elliptic}
Using these preparations, 
let us next study the mod-2 elliptic genera of supersymmetric sigma models
as motivating examples.
The most basic cases are  $S^1$ and $T^2$ with periodic (i.e.~non-bounding) spin structures.
The expression %\eqref{mod2index} 
\eqref{too} allows us to compute the Witten genera explicitly;
we simply find \begin{equation}
\varphi(S^1) =1, \qquad \varphi(T^2)=1.
\end{equation}
Here and below, we suppress the chosen spin structures for $S^1$ and $T^2$, as we always use the periodic spin structure.

As ordinary and mod-2 Witten genera are multiplicative, for any string manifolds $M_n$ of dimension $n \equiv 0 \bmod 8$, we  have \begin{equation}
\varphi(S^1\times M_n) = \varphi(T^2 \times M_n) = \varphi(M_n) \mod 2.
\end{equation} 
They are therefore given by the mod-2 reduction of integral modular forms of weight $n/2$.

Let us now study the mod-2 Witten genera of supersymmetric sigma models on general string manifolds $M_d$ of dimension $d=8k+1$ or $8k+2$. 
In fact, from the fact that $\varphi(S^1)=1$, one can check that $\varphi(S^1 \times M_{8k+1}) = \varphi(M_{8k+1})$.
Therefore, it suffices for us to consider only the case $d=8k+2$.

Here we follow the discussions of \cite{Ochanine1991,Liu1992} 
where the mod-2 version of the Ochanine index (i.e.~the mod-2 elliptic genus
of the \Nequals{(1,1)} supersymmetric sigma models) was considered.
There, the crucial step was the use of  \cite[Proposition on p.343]{StongTextbook}.

To explain the proposition, 
let us take a closed spin manifold $M_{8k+2}$.
Then $S^1\times M_{8k+2}$ is spin null-bordant\footnote{%
\label{foot:spin-null}%
This is due to the following. 
Any spin bordism class can be detected by its $\KO$ Pontrjagin numbers or Stiefel-Whitney numbers. 
The $\KO$ Pontjagin numbers vanish since $\pi_{8k+3}\KO=0$,
and the Stiefel-Whitney numbers vanish since it is a product with $S^1$.
}, 
i.e.~there is a compact spin manifold with a boundary $N_{8k+4}$ such that $\partial N_{8k+4}=S^1\times M_{8k+2}$.
A spin null bordism of $S^1\sqcup S^1$ is given by $S^1\times[0,1]$.
We can then paste $S^1 \times [0,1] \times M_{8k+2}$ with two copies of $N_{8k+4}$ along a common boundary to form a closed spin manifold $W_{8k+4}$.
With this setup,  the Proposition\footnote{%
\label{foot:analytic_stong}
The proof given in \cite{StongTextbook} uses methods of algebraic topology,
using the fact that the construction of $W_{8k+4}$ from $M_{8k+2}$ is that of the Toda bracket $\langle 2,S^1,M_{8k+2}\rangle$.
A rough outline of an analytic proof is as follows. 
By using flat metric on $S^1\times [0,1]$ and imposing an APS boundary condition, 
we can explicitly show $\ind(V \to M ) = \ind(V\to S^1\times [0,1]\times M)$ mod 2.
As the indices with appropriate APS boundary conditions are additive, we have 
$\ind(V\to W)=2\ind(V\to N)+\ind(V\to S^1\times [0,1]\times M)$.
Therefore we have $\ind(V\to M) =  \ind(V\to W)$ mod 2.
%An analytic derivation of this proposition will  essentially be given in Sec.~\ref{subsec_proof_first_phys}, and 
%A proof of its homotopy-theoretic restatement is  given in Remark~\ref{rem_stong}.
}
 states that
\begin{claim}
We have 
\begin{equation}
\ind(V \to M_{8k+2})
= \ind(V \to W_{8k+4}) \mod2,
\end{equation} 
where $V$ is any power of the tangent bundle $TW_{8k+4}$ that is also regarded on $M_{8k+2}$ as $TM_{8k+2} \oplus \bR^2$,
the left hand side is the mod-2 index on $M_{8k+2}$, and the right hand side is the $\bZ$-valued index on $W_{8k+4}$.
Here we use the definition of $\ind$ that includes the factor $\kappa=1/2$ for $n=8k+4$.
\end{claim}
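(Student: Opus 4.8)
The plan is to flesh out the analytic outline indicated in Footnote~\ref{foot:analytic_stong}; an equally valid route would be to reconstruct, in the present twisted setting, the homotopy-theoretic argument of \cite{StongTextbook}. Write $Y:=S^1\times M_{8k+2}$, so that $\partial N_{8k+4}=Y$ while $\partial(S^1\times[0,1]\times M_{8k+2})=Y\sqcup Y$, and recall that $W_{8k+4}$ is obtained by gluing one copy of $N_{8k+4}$ and one copy of the orientation-reversal $\overline{N}_{8k+4}$ onto the two ends of the cylinder $S^1\times[0,1]\times M_{8k+2}$. On collars of the two gluing loci I would use a product metric whose $S^1$ factor is flat, and on the cylinder piece itself the product of that flat metric on $S^1$ (with its periodic spin structure), the standard metric on $[0,1]$, and an arbitrary fixed metric on $M_{8k+2}$. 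Since $V$ is a fixed functorial power of $TW_{8k+4}$ it restricts compatibly with all these product structures, becoming on $M_{8k+2}$ the corresponding power of $TM_{8k+2}\oplus\bR^2$.

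First I would establish additivity. Cutting $W_{8k+4}$ along the two copies of $Y$ and using the additivity of the index of the $V$-twisted Dirac operator under gluing, with complementary Atiyah--Patodi--Singer spectral boundary conditions on the two sides of each cut, gives
\begin{multline*}
\ind(V\to W_{8k+4}) = \ind_{\mathrm{APS}}(V\to N_{8k+4}) + \ind_{\mathrm{APS}}(V\to \overline{N}_{8k+4}) \\
{}+ \ind_{\mathrm{APS}}(V\to S^1\times[0,1]\times M_{8k+2}).
\end{multline*}
Reversing the orientation of $N_{8k+4}$ interchanges the two spectral projections and flips the sign of the boundary $\eta$-invariant while leaving $\dim\ker D_Y$ unchanged; from this one sees that the first two terms on the right are equal, so that their sum is $2\,\ind_{\mathrm{APS}}(V\to N_{8k+4})$ and in particular even. (Equivalently, $N_{8k+4}\cup\overline{N}_{8k+4}$ is the double of $N_{8k+4}$, which bounds.) Hence $\ind(V\to W_{8k+4})\equiv \ind_{\mathrm{APS}}(V\to S^1\times[0,1]\times M_{8k+2})\pmod 2$.

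Next I would identify the cylinder contribution with the mod-$2$ index on $M_{8k+2}$. Because the metric on the cylinder piece is a product with a flat $S^1$ carrying the periodic spin structure, the $V$-twisted Dirac operator separates in the $S^1$ direction: the nonzero Fourier modes occur in cancelling pairs and leave the APS index unchanged, while the one-dimensional space of constant spinors on $S^1$ reduces the computation to the $V$-twisted APS index of $[0,1]\times M_{8k+2}$, which a further separation in the $[0,1]$ direction (with the chosen boundary condition) identifies with the mod-$2$ index $\ind(V\to M_{8k+2})$, counted with the halving convention $\kappa=1/2$ of dimension $8k+4$. In $\KO$-theoretic language this is the multiplicativity $[S^1\times Z,V]=\eta\cdot[Z,V]$ together with the fact that for $Z=[0,1]\times M_{8k+2}$ with its spectral boundary data the index recovers $[M_{8k+2},V]\in\KO_{8k+2}\simeq\bZ/2$. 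Together with the previous paragraph this yields $\ind(V\to M_{8k+2})\equiv\ind(V\to W_{8k+4})\pmod 2$.

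The hard part will be this last step: rigorously reducing the $(8k{+}4)$-dimensional APS index on the flat cylinder to the mod-$2$ index on $M_{8k+2}$, matching the spectral boundary conditions across both cuts, keeping careful track of the $\eta$-invariant of $Y$ and of $\dim\ker D_Y$, and checking that every factor of two is consistent — the convention $\kappa=1/2$ in dimension $8k+4$, the $2\,\ind_{\mathrm{APS}}(V\to N_{8k+4})$ above, and the ground field over which $\dim\ker D_{M_{8k+2}}$ is taken so that its parity is the mod-$2$ index. This is precisely the point at which \cite{StongTextbook} instead argues topologically, identifying $W_{8k+4}$ with the Toda bracket $\langle 2,S^1,M_{8k+2}\rangle$ and invoking naturality of the index under such brackets; carrying that out here would require only the extra observation that the twist $V$, being a power of the tangent bundle, is transported along the bracket construction in the obvious way.
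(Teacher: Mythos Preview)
Your overall strategy faithfully follows the analytic sketch the paper gives in its footnote, so the approach is the intended one. There is, however, a genuine error in how you describe $W_{8k+4}$. The paper's construction glues \emph{two copies of $N_{8k+4}$ with the same orientation} to the ends of the cylinder, not $N$ and $\overline{N}$. This is possible precisely because $\overline{S^1\times M_{8k+2}}\cong S^1\times M_{8k+2}$ via the reflection of the $S^1$ factor, so both ends of the cylinder can be identified with the \emph{same} oriented spin manifold $Y$ and each can receive a copy of $N$. With your version, $N\cup(\text{cylinder})\cup\overline{N}$ is diffeomorphic to the double of $N$, which bounds $N\times[0,1]$ and therefore has vanishing twisted index; but the mod-$2$ index of $M=T^2$ is $1$ and the paper's $W$ in that case is $K3$ with $\ind=1$, so your $W$ is not the correct object.

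This error also contaminates the additivity step. Your claim that $\ind_{\mathrm{APS}}(V\to N)=\ind_{\mathrm{APS}}(V\to\overline{N})$ is not correct: your own parenthetical observes that their sum is the index of the bounding double, hence zero, so the two terms are negatives of one another rather than equal. With the correct construction the two end-pieces are literal copies of $(N,V)$ with the same metric and the same APS boundary projection, so their APS indices are \emph{manifestly} equal, giving the $2\,\ind_{\mathrm{APS}}(V\to N)$ that the paper's footnote asserts. Once this is repaired, your outline of the cylinder step matches the paper's, including your honest acknowledgment that the reduction of $\ind_{\mathrm{APS}}(V\to S^1\times[0,1]\times M_{8k+2})$ to the mod-$2$ index on $M_{8k+2}$ is where the real work lies.
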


This proposition immediately implies that \begin{equation}
\varphi(M_{8k+2}) = \varphi(W_{8k+4}) \mod 2
\end{equation}
in view of \eqref{too}. % and \eqref{mod2index}.
As an example, take $M_{8k+2}=T^2$. 
A spin null bordism of $T^3=S^1\times T^2$ is given by the half-K3 surface.\footnote{%
\label{foot:halfK3}
One realizes a K3 surface as an elliptic fibration over $\mathbb{CP}^1$ with 24 singularities.
We can group these singularities into two group of 12 singularities.
Then the $S^1$ surrounding a group of 12 has a non-bounding (i.e.~periodic) spin structure,
and the K3 can be cut into two. This provides a spin null bordism of $T^3$.}
Then $W_{8k+4}$ is the K3 itself. 
The formula \eqref{dog} says that \begin{equation}
\varphi(W_4) =- \int_{W_4} \frac{p_1}{48} E_2(q) 
\end{equation} and therefore \begin{equation}
\varphi(K3)=E_2(q).
\end{equation} We indeed find \begin{equation}
\varphi(T^2) = 1 = E_2(q)  \mod 2,
\end{equation}since $4/B_2=24$ is even.

Note that we can take $W_{8k+4}$ to be a string manifold 
if $N_{8k+4}$ can be taken to be a string manifold,
i.e.~if $S^1\times M_{8k+2}$ is null-bordant as a string manifold.
Combining with the statements at the end of the last subsection, we therefore conclude that:
\begin{claim}{}
$\varphi(M_{8k+2})$ is in general 
a mod-2 reduction of an element of $\bC[E_2,E_4,E_6]\cap \bZ[[q]]$, the ring of integral quasimodular forms, of weight $4k+2$.
Furthermore, if $S^1\times M_{8k+2}$ is null-bordant as a string manifold,
$\varphi(M_{8k+2})$ is a mod-2 reduction of an integral modular form of weight $4k+2$.
\end{claim}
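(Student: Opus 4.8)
The plan is to reduce the statement about $M_{8k+2}$ to the already-established properties of the Witten genus of the $(8k+4)$-dimensional manifold $W_{8k+4}$ constructed above, using the Proposition of Stong recalled in this subsection. Recall that $S^1\times M_{8k+2}$ is always spin null-bordant, so that a closed spin $W_{8k+4}$ as above exists; moreover, if $S^1\times M_{8k+2}$ bounds a string manifold $N_{8k+4}$ carrying the product string structure along its boundary, then the product string structure on the cylinder $S^1\times[0,1]\times M_{8k+2}$ agrees with that of the two copies of $N_{8k+4}$ across the gluing, and in that case $W_{8k+4}$ may be taken to be string.

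The one step with content is the coefficient-wise application of Stong's Proposition, which as stated concerns honest finite-rank bundles. Expanding \eqref{too} as a $q$-series and using the expansion of symmetric powers of a virtual bundle given in a footnote above, the coefficient of each power $q^j$ in $\bigotimes_{m\ge1}\bigl(\bigoplus_{\ell\ge0}q^{m\ell}\mathrm{Sym}^\ell(TM\ominus\bR^{8k+2})\bigr)$ — to which only finitely many $m$ contribute — is a finite $\bZ$-linear combination of bundles of the shape $\mathrm{Sym}^{a_1}(TM)\otimes\cdots\otimes\mathrm{Sym}^{a_r}(TM)$ tensored with trivial bundles, hence a natural bundle built from $TM$. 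The corresponding coefficient for $W_{8k+4}$ is built the same way from $TW_{8k+4}$, and since $TW_{8k+4}|_{M_{8k+2}}=TM_{8k+2}\oplus\bR^2$ one has $(TW_{8k+4}\ominus\bR^{8k+4})|_{M_{8k+2}}=TM_{8k+2}\ominus\bR^{8k+2}$, so these bundles restrict to one another. Applying Stong's Proposition — extended by additivity of the index over direct sums and virtual differences — to each $q$-coefficient gives the equality $\varphi(M_{8k+2})=\varphi(W_{8k+4})\bmod2$ in $\physZtwo[[q]]$.

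It then remains to invoke the Claim from the previous subsection for the closed spin manifold $W_{8k+4}$, whose dimension $8k+4$ is $\equiv4\bmod8$: that Claim gives $\varphi(W_{8k+4})\in\bC[E_2,E_4,E_6]\cap\bZ[[q]]$, an integral quasimodular form of weight $(8k+4)/2=4k+2$, and, in the case $W_{8k+4}$ was taken string, $\varphi(W_{8k+4})\in\mf_{4k+2}$, an integral modular form of weight $4k+2$. Reducing mod $2$ then yields both assertions. I do not expect a serious obstacle: the only point that needs (routine) care is this coefficient-wise use of Stong's Proposition, namely checking that its proof — the Toda-bracket argument of \cite{StongTextbook}, or the analytic APS-index outline given in a footnote above — applies to the natural summands $\mathrm{Sym}^a(TW_{8k+4})$ and their tensor products, and is uniform enough in $j$ to assemble into an identity of elements of $\physZtwo[[q]]$; granting this, the statement is a direct corollary of results already established.
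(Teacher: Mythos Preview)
Your proposal is correct and follows essentially the same approach as the paper: apply Stong's Proposition coefficient-by-coefficient to \eqref{too} to obtain $\varphi(M_{8k+2})=\varphi(W_{8k+4})\bmod 2$, note that $W_{8k+4}$ can be taken string when $S^1\times M_{8k+2}$ is string null-bordant, and then invoke the earlier Claim on the Witten genus of spin/string manifolds in dimension $8k+4$. The only difference is that you spell out in more detail why the Proposition applies to each $q$-coefficient, whereas the paper simply says the implication is immediate from \eqref{too}.
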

Note that $\varphi(T^2)=1$ is \emph{not} a mod-2 reduction of an integral modular form of weight 2, since no modular form of weight 2 exists to start with.
This means that $T^3=S^1\times T^2$ is not null-bordant as string manifolds,
i.e.~there is no spin manifold $N_4$ with a $B$-field solving $dH=p_1/2$ such that $\partial N_4=T^3$.

\subsection{Statements of our main results}
\label{subsec:statements}
After these motivating examples, 
here we state our main results. From now on we mainly use the notations $c_4=E_4$ and $c_6=E_6$.

Our first result is the characterization of the mod-2 Witten genus $\varphi(\sT)$ of \Nequals{(0,1)} supersymmetric theory $\sT$ with gravitational anomaly $n=8k+2$. 
First we recall the concept of null bordism in the space of supersymmetric field theories.
Namely, when a theory can be deformed (in the sense discussed in \cite{Gaiotto:2019asa}) to break
supersymmetry spontaneously with a parametrically controlled vacuum energy, 
i.e.~so that the breaking scale is independent of the spatial size of the 2d spacetime,
we say that it is null-bordant in the space of supersymmetric field theories.

By abuse of notation, let $S^1$ denote the \Nequals{(0,1)}  supersymmetric sigma model on $S^1$ with periodic spin structure. 
Then our first result is:
\begin{claim}
\textbf{First statement:}\\
\PFirst
\end{claim}
We give the bulk of the derivation in Sec.~\ref{subsec_proof_first_phys},
which, roughly speaking, proceeds by imitating the analytic proof of Stong's proposition in footnote \ref{foot:analytic_stong} in the setting of supersymmetric quantum field theory.
Here we only prove the last part.
Integral modular forms of weight $4k+2$ have 
an integral basis spanned by $c_4^i c_6 \Delta^j$ where $i,j$ are  nonnegative integers satisfying $i+3j=k-1$.
The basis of weakly-holomorphic ones are given by allowing negative integers for $j$,
therefore the last sentence follows from the rest.

The physical argument in Sec.~\ref{subsec_proof_first_phys} has a parallel homotopy-theoretic counterparts, proving the following mathematical form of the First statement. 
Let $\sigm \colon \TMF \to \KO((q))$ be the morphism of ring spectra from $\TMF$ to $\KO((q))$.
We use the same symbol $\vphi$ for the homomorphism between their homotopy groups.
We then give a proof of the following mathematical statement in Sec.~\ref{subsec_proof_first_math}:
%\textcolor{black}{and let $\vphi= {\sigm}_*:  \pi_n\TMF \to  \pi_n \KO((q)) $ for $n = 2 \bmod 8$.}
% Let us consider the homotopy commutative diagram of ring spectra,
% \begin{align}
%     \xymatrix{
%     M\stri \ar[d]^-{\iota} \ar[r]^-{\Phi} & \TMF \ar[d]^-{\sigm}\\
%     M\spin  \ar[r]^-{\Phi_{\spin}} & \KO((q)). 
%     }
% \end{align}

\begin{claim}
\textbf{Mathematical form of the First statement:}\\
\MFirst
\end{claim}

For our second statement, currently we only have mathematical derivation,
so we present the mathematical form first:\begin{claim}{}
\textbf{Second statement:}\\
\MSecond
\end{claim}
Our proof is not conceptual and rather is a simple corollary of the detailed structure of 
$\pi_\bullet\TMF$ described in \cite{BrunerRognes}.
In fact it allows us to characterize the image of $\vphi$ for $n=8k+1$ and $8k+2$ completely.

\begin{rem}
Actually, the detailed structure of $\pi_\bullet\TMF$ we use can also be used to prove the First statement above. 
However the proof we give for the First statement in Subsec.~\ref{subsec_proof_first_math} is essentially different,
which is geometric and does not rely on the detailed structure of $\pi_\bullet \TMF$.
\end{rem}

Assuming the Stolz-Teichner conjecture \cite{StolzTeichner1,StolzTeichner2} which states that \begin{equation}
\pi_n\TMF=\TMF_n(pt) = \pi_0(\left\{
\begin{array}{c}
\text{\Nequals{(0,1)} supersymmetric theories} \\
\text{with gravitational anomaly $n$}
\end{array}
\right\})
\end{equation}
and that $\sigm:\pi_n\TMF\to \pi_n\KO((q))$ applied on the left hand side 
can be identified 
with the ordinary and mod-2 Witten genera $\varphi$
applied on the right hand side,
we can restate the second result above as follows:
\begin{claim}{}
\textbf{Physics form of the second statement,  assuming the Stolz-Teichner conjecture:}\\
Let $\sT$ be a theory with gravitational anomaly $n=8k+2$. 
Then $\varphi(\sT)$ is a mod-2 reduction of a weakly-holomorphic integral modular form of weight $4k$.
More explicitly, it is given by a linear combination over $\physZtwo$ of $c_4^{k-3j}\Delta^j \equiv\Delta^j$,
where $j\le \lfloor \frac{k}{3}\rfloor$.
\end{claim}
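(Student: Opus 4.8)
\emph{Proof strategy.} The plan is to derive this physics statement from the mathematical form of the Second statement. Invoking the Stolz--Teichner conjecture to identify $\varphi(\sT)$ with the image of the corresponding class under $\vphi\colon \pi_{8k+2}\TMF\to \pi_{8k+2}\KO((q))\simeq \bZ/2((q))$, the only thing to prove is that this image lies in the mod-$2$ reduction of $\MF_{4k}$; the ``more explicit'' clause then drops out by writing down an integral basis of $\MF_{4k}$, exactly as for the First statement. Concretely, a weakly-holomorphic integral modular form of weight $4k$ is a $\bZ$-combination of the $c_4^{\,k-3j}\Delta^{j}$ with $j\le \lfloor k/3\rfloor$ (solving $4a+6b+12j=4k$ with $b\in\{0,1\}$ forces $b=0$), and reducing modulo $2$ with $c_4\equiv 1$ turns these into the $\Delta^{j}$. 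So it remains to establish the mathematical Second statement, which I would attack by a (mostly computational) study of $\pi_{8k+2}\TMF$ and of $\vphi$ on it.

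First I would reduce to the $2$-primary part: since $\bZ/2((q))$ is $2$-torsion, $\vphi$ kills the odd-torsion summand of $\pi_{8k+2}\TMF$, so it suffices to control $\vphi$ on $\pi_{8k+2}\TMF_{(2)}$. Second I would use periodicity: $2$-locally $\TMF$ is $192$-periodic via $\Delta^{8}\in\pi_{192}\TMF$, with $\vphi(\Delta^{8})$ the integral power series $\Delta^{8}\in\bZ((q))$ whose mod-$2$ reduction $\overline{\Delta}^{8}$ acts on $\bZ/2((q))$; since multiplication by $\overline{\Delta}^{8}$ carries the $\bZ/2$-span of $\{\overline{\Delta}^{\,j}:j\le \lfloor k/3\rfloor\}$ isomorphically onto the corresponding span for $k+24$ (using $\lfloor (k+24)/3\rfloor=\lfloor k/3\rfloor+8$), the statement for all $k$ reduces to finitely many residues $k\bmod 24$. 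For those I would read off the additive generators of $\pi_{8k+2}\TMF_{(2)}$ from the charts in \cite{BrunerRognes}, separating them into (i) classes of the form $\eta^{2}x$ with $x\in\pi_{8k}\TMF_{(2)}$, and (ii) the remaining, higher-filtration (``exotic'') classes.

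Type (i) requires nothing new: $\vphi$ is a ring map, so $\vphi(\eta^{2}x)=\eta^{2}\vphi(x)$, and $\vphi(x)\in\MF_{4k}$ because $\vphi$ on $\pi_{8k}\TMF$ lands in integral modular forms of weight $4k$; under $\pi_{8k+2}\KO((q))\simeq\bZ/2((q))$ (sending $\eta^{2}\beta^{k}\mapsto 1$) the element $\eta^{2}\vphi(x)$ is exactly the mod-$2$ reduction of the integral modular form $\vphi(x)$, hence lies in the mod-$2$ reduction of $\MF_{4k}$. The real content is step (ii): for each exotic generator I would evaluate $\vphi$ directly, using naturality along the Tate-curve map $\TMF\to\KO((q))$ and reading its image in $\pi_{*}\KO((q))$ off the same \cite{BrunerRognes} data. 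Many such classes sit above the filtration in which $\pi_{8k+2}\KO((q))$ is concentrated and therefore map to $0$; the finitely many survivors I would check by hand to land in the $\bZ/2$-span of $\{\overline{\Delta}^{\,j}:j\le\lfloor k/3\rfloor\}$. This analysis in fact determines the image of $\vphi$ on all of $\pi_{8k+2}\TMF_{(2)}$, not merely the asserted containment.

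The main obstacle is exactly this last, finite-but-delicate bookkeeping over the residue classes $k\bmod 24$ and over the exotic generators; it is the reason the proof is computational rather than conceptual. In parallel I would record the conceptual shortcut one would like to have --- factoring $\vphi$ through the $\KO((q))$-based Hurewicz map $\TMF\to\KO((q))\wedge\TMF\to\KO((q))$ and using a splitting or Künneth description of $\KO_{*}\TMF$ to pin down the image of $\vphi$ intrinsically --- and I would flag the input about $\KO_{*}\TMF$ that is still missing for this route to go through.
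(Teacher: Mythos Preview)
Your high-level plan matches the paper's: invoke Stolz--Teichner to reduce to the mathematical Second statement, extract the explicit clause from an integral basis of $\MF_{4k}$, and prove the mathematical statement by a $2$-local computation in $\pi_\bullet\tmf$ using the tables of \cite{BrunerRognes}, then invert the periodicity element.

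Where your organization of the computation differs is instructive. You split the generators of $\pi_{8k+2}\tmf_{(2)}$ into $\eta^2$-multiples and ``exotic'' classes, handling the latter by a vague filtration argument plus hand checks. The paper's device is cleaner: since $\vphi(B)=c_4$ is invertible in $\pi_\bullet\KO((q))$, every $B$-power torsion class in $\pi_\bullet\tmf$ dies under $\vphi$; one then quotes the explicit list of generators of $\pi_{8k+2}\tmf$ modulo $B$-power torsion from \cite[Theorem~9.26]{BrunerRognes}, namely $\bZ/2[M,B]\{\eta^2,\eta\eta_1,\eta_1^2,\eta^2B_3,\eta\eta_4,\eta_1\eta_4,\eta^2B_6,\eta^2B_7\}$, and the relation $\eta_kB=\eta B_k$ with $\vphi(B_k)=c_4\Delta^k$ forces $\vphi(\eta_k)=\eta\Delta^k$, finishing in one line. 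The $B$-torsion criterion is the precise version of your filtration remark, and it avoids the awkwardness that classes like $\eta_1^2$ or $\eta_1\eta_4$ are \emph{not} of the form $\eta^2x$ in $\pi_\bullet\tmf$ (indeed $\Delta,\Delta^2,\Delta^5\notin\vphi(\pi_{8k}\TMF)$) yet must still land in $\eta^2\mf_{4k}$. The paper also gives a second proof, reducing to the First statement via the lemma that $e_8\in\pi_{-16}\TMF$ (with $\vphi(e_8)=c_4/\Delta$) annihilates $\pi_{8k+3}\TMF$; this is a different conceptual hook from the $\KO_\ast\TMF$ route you sketch.
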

Again, we only prove the last sentence of the claim here, relegating the rest to Sec.~\ref{yamashita}.
The derivation of the last sentence is entirely analogous to the case above;
the only difference is that the integral basis is now given by $c_4^i  \Delta^j$
with $i+3j=k$,
from which the last sentence follows.

Note that the first statement and the second statement have differing upper bounds for $j$ when $k=3\ell$.
For consistency, it should be that when $\varphi(\sT)=\Delta^\ell$,
$ \sT \times S^1$ should not be null-bordant within the space of supersymmetric quantum field theories.
In the language of $\TMF$, it should be that when $\vphi(x)=\eta^2 \Delta^\ell$ 
for $x\in \pi_{24\ell+2}\TMF$, we should have $\eta x\neq 0 \in \pi_{24\ell+3}\TMF$.
This in fact follows from the results of \cite{Bunke} (whose physics counterpart is discussed in \cite{Gaiotto:2019gef,Yonekura:2022reu}),
and we will also check that using the data provided in \cite{BrunerRognes} at the end of Sec.~\ref{subsec_proof_second}.

\section{First statement}
\label{yonekura}

\subsection{The derivation of the First statement}\label{subsec_proof_first_phys}
In this section, we derive our first statement:\begin{claim}
\textbf{First statement:}\\
\PFirst
\end{claim}

This statement can be regarded as a corollary of the results of \cite{Bunke,Gaiotto:2019gef,Yonekura:2022reu} as we will comment near the end of
this subsection.
We discuss it in the physics language, but the following argument also makes sense mathematically if we restrict to string manifolds.
Then at the end we will mention additional facts so that the statement applies to $\TMF$. 
A more direct and mathematically rigorous proof without relying on \cite{Bunke} will be given in Section~\ref{subsec_proof_first_math}.

Let us consider the theory $\sT \times S^1 \times \bR$. 
%For notational simplicity we omit square brackets and just write the sigma model with a target space $M$ simply as $M$.
\textcolor{black}{The target space is noncompact in the $\bR$ direction, and the energy spectrum is not discrete. In such a case, the definition of the partition function by a formula like \eqref{foo} is not straightforward. 
Thus we do not use the naive formula \eqref{foo}.
Instead, we define a kind of Witten index %for each eigenspace of the worldsheet momentum $P$ 
following the definition of the Atiyah-Patodi-Singer (APS) index~\cite{Atiyah:1975jf}. It was discussed
in the context of supersymmetric field theories in \cite{Yonekura:2022reu}.\footnote{
If the partition function $Z(\tau,\overline{\tau})$ is defined in such a way that it is modular invariant, then it depends on $\overline{\tau}$ as well as 
$\tau$~\cite{Gaiotto:2019gef,Dabholkar:2020fde}. 
The definition discussed below roughly corresponds to the limit $\overline{\tau} \to -\sqrt{-1} \cdot \infty$ with $\tau$ fixed. This limit is not modular invariant even if $Z(\tau,\overline{\tau})$ is, and this fact will play an important role.
}
}

We consider states annihilated by the supercharge $Q$,\footnote{For string manifolds, $Q$ is a Dirac operator coupled to appropriate powers of
the tangent bundle as in \eqref{too}. } 
and count the number of such states as usual in the definition of the Witten index. 
However, if the target space is noncompact, 
we need to be careful about the normalizability of the wave function of states in the noncompact direction.\footnote{
For string manifolds, a ``wave function'' is a section of an appropriate bundle on which the Dirac operator acts.}
The APS index is defined as follows. For states with $(-1)^F=+1$, we count states whose wave function (or more precisely its absolute value) 
is bounded but not necessarily square normalizable.
On the other hand, for states with $(-1)^F=-1$, we only count states which are square normalizable.
Thus there is asymmetry in the treatment of states with $(-1)^F=+1$ and $(-1)^F=-1$. 
%See \cite{Yonekura:2022reu} for more detailed discussion in the context of supersymmetric field theories. 

Using the above counting rule for the APS index, we can define $\varphi(\sT \times S^1 \times \bR)$ as in the case of compact theories,
which is not generally a modular form due to the existence of noncompact directions~\cite{Gaiotto:2019gef,Dabholkar:2020fde}.
%In fact, its deviation from being a modular form will turn out to be an invariant of the theory $\sT \times S^1$.
Let us consider a quantity $b(\sT \times S^1)$ defined by
\beq\label{eq:invariant}
b(\sT \times S^1)= \frac12 \varphi(\sT \times S^1 \times \bR)  \mod \bZ((q)) + \mathbb{Q} \otimes \MF_{4k+2},
\eeq
which measures the deviation of $\varphi(\sT \times S^1\times \bR)$ from being modular.
About this quantity, we will show the following two facts which establish the First statement:
\begin{itemize}
\item We have $\varphi(\sT) =  \varphi(\sT \times S^1 \times \bR) \bmod 2\bZ((q))$ and hence
\beq\label{eq:bvarphi2rel}
b(\sT \times S^1)= \frac12 \varphi(\sT) \mod \bZ((q)) +  \mathbb{Q}  \otimes \MF_{4k+2}.
\eeq
\item $b(\sT \times S^1)$ is zero when $\sT \times S^1$ is null-bordant. 
\end{itemize}
The reason that these facts establish the First statement is as follows. When $\sT \times S^1$ is null-bordant,
$\varphi(\sT)$ is zero  modulo $ 2\bZ((q)) + ( \mathbb{Q} \otimes \MF_{4k+2})$. In other words,
there exists $\phi \in \bZ((q))$ such that
$\varphi(\sT) + 2\phi \in \mathbb{Q} \otimes \MF_{4k+2}$ where $\varphi(\sT)$ is lifted in an arbitrary way from $ \bZ_2((q))$ to $ \bZ((q))$.
Thus $\varphi(\sT) + 2\phi$ takes values in $ \bZ((q))\cap (\mathbb{Q}\otimes \MF_\bullet ) = \MF_\bullet$.
This is exactly the First statement.

The equality $\varphi(\sT) =  \varphi(\sT \times S^1 \times \bR) \bmod 2\bZ((q))$ can be directly seen. 
The wave functions of the states annihilated by $Q$
must be constant in the $S^1 \times \bR$ directions. These constant wave functions are bounded but not square normalizable.
By taking into account the definition of the APS index explained above,
the number of zero modes with $(-1)^F=+1$
is the same as the total number of zero modes of $\sT$, while there is no zero mode with $(-1)^F=-1$. Thus we get the desired equality. 

Next let us show that $b(\sT \times S^1)$ is zero when $\sT \times S^1$ is null-bordant. 
%To see the reason that \eqref{eq:invariant} is an obstruction to the existence of null bordism of $\sT \times S^1$, 
Suppose that $\sT \times S^1$ is null-bordant in the space of supersymmetric field theories.\footnote{
For string manifolds, the usual null bordism of a manifold implies the null bordism in the physical sense discussed here,
as shown in \cite{Gaiotto:2019asa}.
}
Then, we have a noncompact theory $\sY$ with the gravitational anomaly $8k+4$ that has a ``boundary at infinity'' of the form $\sT \times S^1$.
Here, by a ``boundary at infinity'', we mean that the noncompact direction of $\sY$ is of the form $\sT \times S^1 \times \bR_{<0}$.
For instance, let $x \in \bR$ be the standard coordinate of the sigma model target space $\bR$. We may couple this sigma model to $\sT \times S^1$
in such a way that the supersymmetry breaking vacuum energy of $\sT \times S^1$ increases rapidly as $x \to +\infty$,
and goes to zero as $x \to -\infty$. In this way, the region $x \to +\infty$ has an increasing potential energy and is effectively ``compact'' in the sense
of the energy spectrum. (For instance, a harmonic oscillator in quantum mechanics is compact in the sense that the energy spectrum is discrete, 
even though its
target space itself is noncompact.)
Thus the noncompactness comes only from the region $x \to -\infty$. This is an example of $\sY$. 
For more details see \cite{Gaiotto:2019asa,Gaiotto:2019gef,Johnson-Freyd:2020itv,Yonekura:2022reu}.

Now we glue two copies of $\sY$ to $\sT \times S^1 \times \bR$. We notice that $\sT \times S^1 \times \bR$ has two boundaries at infinity.
One boundary is $\sT \times S^1$, and the other boundary is the orientation reversal $\overline{\sT \times S^1}$.\relax 
\footnote{For the meaning of ``orientation reversal'' in general supersymmetric field theories, see \cite{Yonekura:2022reu}.}
However, since the orientation reversal is given by $\overline{\sT \times S^1} = \sT \times \overline{S^1} = \sT \times S^1$,
we just have two copies of $\sT \times S^1$ as the boundaries of $\sT \times S^1 \times \bR$. We glue them to the boundaries of two copies of $\sY$
and get a theory $\sW$ which is compact in the sense that the energy spectrum is discrete. 

The APS index has the gluing law
which is also true in general supersymmetric field theories \cite{Yonekura:2022reu}.
When we glue manifolds, the APS index of the manifold obtained by 
the gluing is the sum of the APS indices of the glued manifolds.
Thus we get\footnote{In the following formula, the APS index of $\sY$ needs to be defined such that
we count square normalizable wavefunctions for $(-1)^F=+1$ and bounded wavefunctions for $(-1)^F=-1$, which was a choice opposite to the one used before.
This point however does not play any important role in the following.}
\beq
\varphi(\sW) = \varphi(\sT \times S^1 \times \bR) + 2 \varphi(\sY).
\eeq
$\sW$ is a compact theory in the sense of the energy spectrum, and hence $\varphi(\sW)  \in \MF_{4k+2}$.
Moreover, when the gravitational anomaly is $8k+4$, the APS index is a multiple of 2 as in the case of the ordinary index.
Thus the number of states are always multiples of 2,
and hence we have $\varphi(\sY) \in \bZ((q))$. (Recall our inclusion of $\kappa=1/2$ when $n=8k+4$.) 
Thus, we see that $b(\sT \times S^1)$ defined in \eqref{eq:invariant}
vanishes when $\sT \times S^1$ is null-bordant. 
From this discussion, it is also clear that $\varphi(\sT)$ is the mod-2 reduction of $\varphi(\sW)$.
This concludes the physical derivation of the First statement.

Let us put the above discussion in the context of \cite{Bunke,Gaiotto:2019gef,Yonekura:2022reu}. The quantity $b(\sT \times S^1)$
is an invariant of supersymmetric field theories with the gravitational anomaly $8k+3$~\cite{Gaiotto:2019gef,Yonekura:2022reu}, 
applied to the special case where the theory is of the form $\sT \times S^1$. This invariant gives an obstruction to null bordism (i.e., 
continuous deformation to spontaneous supersymmetry breaking). We found above that $b(\sT \times S^1)$
is given by \eqref{eq:bvarphi2rel}. Therefore, if $\varphi(\sT)$ is not a mod-2 reduction of an element of $\MF_{4k+2}$,
the invariant is nonzero and $\sT \times S^1$ is not null-bordant. 

%One can directly check that 
%\beq
%\varphi_2(\sT) =  \varphi(\sT \times S^1 \times \bR) .
%\eeq
%Therefore, $\varphi_2(\sT) $ is zero modulo $\bZ((q)) + \mathbb{Q} \otimes \MF_{4k+2}$ when $\sT \times S^1$ is null-bordant. 
%This means that $\varphi_2(\sT)$ is realized as a mod 2 reduction of a modular form in $\MF_{4k+2}$. 
%This concludes the physics derivation of the First statement.
%If $\varphi_2(\sT)$ is not a mod 2 reduction of a modular form in $\MF_{4k+2}$, then the invariant $J(\sT \times S^1)$ is nonzero
%and hence $\sT \times S^1$ is not null-bordant. 

Finally, let us briefly sketch how to make the argument applicable to $\TMF$ without relying on physics. 
%(In this paragraph we assume that the reader knows the results of \cite{Bunke}.)
First we apply the above argument to the case where $\sT$ is a string manifold
and get the result about the case of string manifolds.
Then we use the results of Bunke and Naumann~\cite{Bunke},
where an invariant 
\beq
 b^{\rm tmf}: \pi_{4m-1} \mathrm{tmf} \to T_{2m}:=\frac{\mathbb{R} ((q))}{  \bZ((q)) + \mathbb{R} \otimes \MF_{2m}}
\eeq was defined.
Combined with $ \pi_\bullet M\mathrm{String} \to \pi_\bullet \tmf$, we also get an invariant 
$ b^{\rm top}:  \pi_{4m-1} M\mathrm{String} \to T_{2m}$. 
When $4m-1=8k+3$, $S^1 \times M$ is always spin null-bordant,
and this is the situation where all the invariants discussed in \cite{Bunke} 
(i.e. analytic $b^{\rm an}$, geometric $b^{\rm geom}$, and topological $b^{\rm top}$) coincide with each other~\cite[Lemma~3.4 and Theorem~4.2]{Bunke}.
Our description of the invariant $b(\sT \times S^1)$ above coincides with their analytic/geometric invariant by the APS index theorem,
where we use the obvious inclusion $\mathbb{Q} \subset \bR$.
Now by using the surjectivity of $\pi_\bullet M\mathrm{String} \to \pi_\bullet \tmf$ (see \cite[Theorem~6.25]{Hopkins2002} and \cite{Surjectivity}), 
we compute the values of
the invariant $b^{\rm tmf}$ for elements of the form $\eta  x \in \pi_{8k+3} \tmf$ where $\eta=[S^1] \in \pi_1 \tmf$ and $\forall x \in \pi_{8k+2}\tmf$,
and conclude that
\beq
b^{\rm tmf}(\eta x) = \frac12 \varphi(x) \mod \bZ((q)) + \mathbb{R} \otimes \MF_{4k+2}.
\eeq
Notice also that the APS index of a manifold of the form $M \times N$ where $M$ is closed and $N$ is compact and has a boundary
is just the product of the Atiyah-Singer index for $M$ and the APS index for $N$. This implies that
the above discussion is compatible with the multiplication by the periodicity element $\Delta^{24}$. 
Thus the result is valid for $\pi_\bullet\TMF = \pi_\bullet \tmf[\Delta^{-24}]$.

\subsection{The proof of the mathematical form of the First statement}\label{subsec_proof_first_math}
Recall the notation $\eta \in \pi_1 S$ introduced in Subsection~\ref{sec:intro_math}.
As remarked there, we also use the symbol $\eta$ to denote the generators in $\pi_1 M\spin \simeq \pi_1 M\mathrm{String} \simeq \pi_1 \TMF \simeq \pi_1 \KO \simeq \bZ/2$ by abuse of notation. 
We show the following statement.
\begin{claim}
\textbf{Mathematical form of the First statement:}\\
\MFirst
\end{claim}

{
%\color{red}
We need some notations.
We work on the symmetric monoidal stable $\infty$-category of spectra where the sphere spectrum $\mathbb{S}$ is the monoidal unit. 
For a spectrum $E$ and an element $x \in \pi_N (\mathbb{S})$ represented by a map $x \colon \Sigma^{N}\mathbb{S} \to \mathbb{S}$ which we denote by the same symbol, 
we define the spectrum $E/x$ by the homotopy fiber sequence
\begin{align}\label{eq_R/x}
    \Sigma^N E \xrightarrow{x \cdot} E \to E/x, 
\end{align}
where the first map is the multiplication by $x$. 
We note that any morphism $a:E\to F$ is compatible with the structure of $\mathbb{S}$-modules,
and therefore induces a morphism $a:E/x\to F/x$ with the following morphism of fiber sequences, 
\beq
\vcenter{\xymatrix{
 \Sigma^{N} E \ar[r]^{~x \cdot} \ar[d]^{a } &  E \ar[r]^-{}\ar[d]^{a  } &  E/x \ar[d]^{a }  \\
 \Sigma^{N} F \ar[r]^{~x \cdot } & F \ar[r] & F/x & .
}}
\eeq
When $a : \Sigma^M E \to  E$ is given by a multiplication by $y  \in \pi_M (\mathbb{S})$, there is another type of induced morphism $y \cdot : \Sigma^M(E/x) \to E/(xy)$ with the following morphism of fiber sequences, 
\beq
\vcenter{\xymatrix{
 \Sigma^{N+M} E \ar[r]^{\quad x \cdot} \ar[d]^{\id } & \Sigma^M E \ar[r]^-{}\ar[d]^{y \cdot } &  \Sigma^M  (E/x) \ar[d]^{y \cdot }  \\
 \Sigma^{N+M} E \ar[r]^{\quad xy \cdot } & E \ar[r] & E/(xy) & .
}}
\eeq
By using the second type of the induced map, now consider the map induced by multiplication by $2 \in \pi_0 (\mathbb{S})$, 
\begin{align}
   2\cdot \colon E/\eta \to E/(2\eta) \simeq E \vee \Sigma^{2 }E.
\end{align}
We compose it with the map $\id_{E} \vee 0 \colon E \vee \Sigma^{2}E \to E$ and define
\begin{align}
    \multi(2) := (\id_{E} \vee 0) \circ (2 \cdot) \colon E/\eta \to E. 
\end{align}

\begin{proof}[Proof of the First statement]
We start from the following commutative diagram induced by $\sigm \colon \TMF \to \KO((q))$, 
\begin{align}
\vcenter{    \xymatrix@C=50pt{
     \pi_{8k+2}\TMF  \ar[d]^{\varphi} & \pi_{8k+4}(\TMF/\eta) \ar[l]_{\partial}\ar[r]^-{\multi(2)} \ar[d]^-{\sigm} & \pi_{8k+4}\TMF \ar[d]^-{\sigm}\\
   \pi_{8k+2}\KO((q))& \pi_{8k+4}(\KO((q)) /\eta)  \ar[l]_{\partial} \ar[r]^-{\multi(2)} & \pi_{8k+4} \KO((q)) 
    }}.\label{3.12}
\end{align}
To study the bottom row, we consider the homotopy commutative diagram whose rows are fiber sequences, 
\begin{equation}
\vcenter{\xymatrix{
 \Sigma \KO \ar[r]^{\eta \cdot} \ar[d]^{\id} & \KO \ar[r]^-{c}\ar[d]^{2 \cdot } & \KO/\eta \simeq \KU \ar[d]\ar[r]^-{\partial} & \Sigma^2\KO \ar[d]^{\id} \\
 \Sigma \KO \ar[r]^{2\eta \cdot =0} & \KO \ar[r] & \KO\vee \Sigma^2\KO \ar[r]& \Sigma^2\KO. 
}}
\end{equation}
Here we used Wood's theorem $\KO/\eta\simeq \KU$ under which the morphism $\KO\to \KO/\eta$ is given by the complexification $c:\KO\to \KU$, 
see e.g.~\cite[p.206]{Adams74} or \cite[1.2 Proposition]{snaith_1987}.
Taking the homotopy groups and using the fact that the complexification $c \colon \pi_{8k+4}\KO \simeq \bZ \to \pi_{8k+4}\KU \simeq \bZ$ maps a generator to twice a generator, we find that $\multi(2):\pi_{8k+4}(\KO/\eta) \simeq \bZ \to \pi_{8k+4}\KO\simeq \bZ$ is an isomorphism,
and that $\partial: \pi_{8k+4}(\KO/\eta)  \simeq \bZ \to \pi_{8k+2}\KO\simeq \bZ/2$ is the mod-2 reduction.
Therefore, the diagram \eqref{3.12} becomes 
\begin{equation}
\vcenter{    \xymatrix@C=50pt{
     \pi_{8k+2}\TMF  \ar[d]^{\varphi} & \pi_{8k+4}(\TMF/\eta) \ar[l]_{\partial}\ar[r]^-{\multi(2)} \ar[d]^-{\sigm} & \pi_{8k+4}\TMF \ar[d]^-{\sigm}\\
   \bZ/2((q))& \bZ((q))   \ar[l]_{\mod 2} \ar[r]^-{\simeq} & \bZ((q)) 
    }},
\end{equation}

Now, by the exact sequence $\pi_{8k+4}(\TMF/\eta) \xrightarrow[]{\partial} \pi_{8k+2} \TMF \xrightarrow{\eta \cdot} \pi_{8k+3}\TMF $, the subgroup $A_{8k+2} \subset \pi_{8k+2} \TMF$ is the image of $\partial \colon \pi_{8k+4}(\TMF/\eta) \to \pi_{8k+2} \TMF $. 
As the image of $\varphi$ on the rightmost vertical arrow is contained in $\MF_{4k+2}$,
the statement follows.
\end{proof}

\begin{rem}\label{rem_stong}
Here we show that the same technique can be used %applied to $\mathrm{ABS} \colon M\spin \to \KO$ or any other $\KO$-Pontrjagin classes $\pi\colon M\spin\to \KO$ 
to prove the following:
    \begin{claim}
    \textbf{\cite[ Proposition on p.343]{StongTextbook}}\ 
Given a closed $(8k+2)$-dimensional spin manifold $M_{8k+2}$, choose a closed $(8k+4)$-dimensional spin manifold \[
W_{8k+4} := N \cup ([0, 1] \times S^1 \times M) \cup N\] constructed as in Subsec.~\ref{subsec_preliminary_mod2elliptic}. 
\textcolor{black}{Take any morphism of spectra $\pi:M\spin\to \KO$,
such as the ordinary Atiyah-Bott-Shapiro index.
Then  $\pi([M]) \in \pi_{8k+2}\KO \simeq \bZ/2$ is the mod two reduction of $\pi([W]) \in \pi_{8k+4}\KO \simeq \bZ$.}
\end{claim}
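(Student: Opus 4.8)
The plan is to run the homotopy-theoretic argument behind the mathematical form of the First statement ``one level down'': replace $\sigm\colon\TMF\to\KO((q))$ by the Atiyah--Bott--Shapiro map $\ABS\colon M\spin\to\KO$, and replace $\eta\in\pi_1\TMF$ by $[S^1]\in\Omega_1^\spin$. Since Lemmas \ref{lem_KO/eta} and \ref{lem_2multi} are already established at the level of $M\spin$ and $\KO$, essentially all the input is in hand. First I would package the geometric data: the chosen spin null bordism $N_{8k+4}$ of $S^1\times M_{8k+2}$ gives, via the description of $\pi_\bullet(M\spin/[S^1])$ as a bordism group of pairs recalled in Subsec.~\ref{subsec_proof_first_math}, a class $[M,N]\in\pi_{8k+4}(M\spin/[S^1])$. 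By the same description, the boundary map $\partial$ of the cofiber sequence \eqref{eq_R/x} sends $[M,N]\mapsto[M]\in\Omega_{8k+2}^\spin$, while $\pi_{8k+4}(2\cdot)\colon\pi_{8k+4}(M\spin/[S^1])\to\pi_{8k+4}(M\spin/(2[S^1]))\simeq\Omega_{8k+4}^\spin$ sends $[M,N]\mapsto[W]$; this is exactly the identification of the top horizontal arrow of \eqref{diag_lem_multi2}, and the cut-and-paste construction underlying it holds in every degree by the same reasoning.

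Next I would push $[M,N]$ forward along $\ABS$ and track it through the two commutative squares induced by $\ABS$, both of which already appear in the proofs of Lemmas \ref{lem_KO/eta} and \ref{lem_2multi}. Set $x:=\ABS([M,N])\in\pi_{8k+4}(\KO/\eta)$. The square built from the boundary maps gives $\partial(x)=\ABS([M])\in\pi_{8k+2}\KO\simeq\bZ/2$, and the square \eqref{diag_lem_multi2} gives $\mu(x)=\ABS([W])\in\pi_{8k+4}(\KO/(2\eta))\simeq\pi_{8k+4}\KO\simeq\bZ$, where $\mu:=\pi_{8k+4}(2\cdot)$. Thus both $\ABS([M])$ and $\ABS([W])$ are read off from the single class $x$.

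Finally I would conclude with the two lemmas. By Lemma \ref{lem_KO/eta}, $\pi_{8k+4}(\KO/\eta)\simeq\bZ$ and $\partial$ is the reduction modulo $2$; by Lemma \ref{lem_2multi}, $\mu\colon\pi_{8k+4}(\KO/\eta)\xrightarrow{\sim}\pi_{8k+4}\KO$ is an isomorphism. Hence both $\partial$ and the composite of $\mu$ with the mod-$2$ reduction $\pi_{8k+4}\KO\simeq\bZ\to\bZ/2$ are surjections $\pi_{8k+4}(\KO/\eta)\to\bZ/2$ with kernel $2\pi_{8k+4}(\KO/\eta)$, so they agree, and therefore
\[
\ABS([M]) \;=\; \partial(x) \;=\; \mu(x)\bmod 2 \;=\; \ABS([W])\bmod 2 ,
\]
which is the assertion. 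I do not anticipate a genuine obstacle: the content of the remark is precisely that the $J=1$ case of Stong's proposition is the $\ABS$-shadow of the ``divide by $2$ across the cofiber of $[S^1]$'' maneuver used above, so the one thing needing care is the bookkeeping that matches the geometric cut-and-paste $[M,N]\mapsto[W]$ with the topological map $\pi_{8k+4}(2\cdot)$ --- and that matching is exactly the geometric identification carried out in the proof of Lemma \ref{lem_2multi}.
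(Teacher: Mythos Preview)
Your proposal is correct and follows essentially the same route as the paper: the paper assembles exactly the two commutative squares you describe into a single diagram with $[M,N]\in\pi_{8k+4}(M\spin/[S^1])$ mapping to $[M]$ under $\partial$ and to $[W]$ under $\pi_{8k+4}(2\cdot)$, and then invokes Lemmas~\ref{lem_KO/eta} and~\ref{lem_2multi} to identify the bottom row with $\bZ/2\xleftarrow{\bmod 2}\bZ\xrightarrow{\simeq}\bZ$. Your final step, observing that any two surjections $\bZ\to\bZ/2$ coincide, is just the explicit unpacking of what the paper's diagram encodes.
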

\begin{proof}
We can use the following commutative diagram, 
\begin{align}
    \vcenter{\xymatrix@C=50pt{
    \pi_{8k+2}M\spin \ar[d]^-{\pi} &\pi_{8k+4}(M\spin/\eta) \ar[d]^-{\pi} \ar[l]_-{\partial} \ar[r]^-{\multi(2)} &\pi_{8k+4}M\spin  \ar[d]^-{\pi} \\
    \pi_{8k+2}\KO \ar[d]^-{\simeq} &\pi_{8k+4}(\KO/\eta) \ar[l]_-{\partial} \ar[r]^-{\multi(2)}\ar[d]^-{\simeq} & \pi_{8k+4}\KO\ar[d]^-{\simeq} \\
    \bZ/2 & \bZ \ar[l]_-{\mod 2} \ar[r]^-{\simeq} & \bZ
    }},
\end{align}
which can be obtained exactly as before.

For general $i$, $\pi_{i+2}(M\spin / \eta)$ is the bordism group of pairs $(M_i, N_{i+2})$, where $M_i$ is an $i$-dimensional closed spin manifold and $N_{i+2}$ is a spin null bordism of $M \times S^1$, where $S^1$ is equipped with the nonbounding spin structure\footnote{A null-bordism of a pair $(M_i, N_{i+2})$ is given by a pair $(A_{i+1}, B_{i+3})$, where $A_{i+1}$ is an $(i+1)$-dimensional compact spin manifold equipped with an isomorphism $\partial A \simeq M$ and $B_{i+3}$ is a $\langle 2 \rangle$-manifold (a formulation of manifold with corners in \cite{Janich1968}) equipped with an isomorphism at each components of the boundary $\partial_0 B \simeq A \times S^1$, $\partial_1 B \simeq N$, which are compatible with the isomorphisms given at the corner $\partial_{01} B = \partial_0 B \cap \partial_1 B \simeq M \times S^1$. The justification of this geometric model can be given essentially in the same way as \cite[pp.25-26]{StongTextbook}}. 
Then the element $[M_{8k+2}, N_{8k+4}] \in \pi_{8k+4}(M\spin/\eta)$ maps to $[M_{8k+2}] \in \Omega_{8k+2}^\spin$ under $\partial$ and maps to $[W_{8k+4}] \in \Omega_{8k+4}^\spin$ under $\multi(2)$.
By the commutativity of the above diagram, 
the statement follows.
\end{proof}
\end{rem}
}

\section{Second statement}
\label{yamashita}

Here we show our second statement which is given in Subsection~\ref{sec:intro_math}:
\begin{claim}{}
\textbf{Second statement:}\\
\MSecond
\end{claim}

We provide two proofs. 
Our first proof is purely computational and does not shed much light on the mechanism behind it.
The second proof uses a lemma to reduce our Second Statement to our First Statement.
The lemma, unfortunately, is still proved purely computationally, 
but the authors expect that it can be given a more conceptual proof in the future.
This expectation is based on a physics consideration, which is also explained.

\subsection{A proof}
\label{subsec_proof_second}
\begin{proof}[The proof of the Second Statement]

We prove the statement obtained by replacing $\TMF$ by $\tmf$ and $\MF$ by $\mf$;
this suffices since we only have to invert the periodicity element $\Delta^{24}$ at the end.
Moreover we can localize at prime $2$,
so all the spectra are localized at $2$ in this proof. 

The proof for $\tmf$ is done by using the results of \cite[Chapter 9]{BrunerRognes},
where the structure of the ring $\pi_\bullet\tmf$ is described in detail.
We first note that $\sigm$ is multiplicative.
There is an element denoted by $B\in \pi_8\tmf$ 
which maps to $c_4\in \mf_4 \subset \pi_8\KO((q))$.
As $c_4$ is invertible in $\pi_\bullet\KO((q))$,
any $B$-power torsion in $\pi_\bullet\tmf$ maps to zero in $\pi_\bullet\KO((q))$.

Now, the structure of $\pi_\bullet\tmf$ modulo $B$-power torsion elements 
is given in Theorem 9.26 of \cite{BrunerRognes},
which says that the elements in degree $8k+1$ are in
\begin{equation}
\bZ/2[M,B]\{\eta,\eta_1,\eta B_2,\eta B_3,\eta_4,\eta B_5,\eta B_6,\eta B_7\} 
\end{equation} and those in degree $8k+2$ are in 
\begin{equation}
\bZ/2[M,B]\{\eta^2,\eta\eta_1, \eta_1^2,\eta^2B_3,\eta\eta_4,\eta_1\eta_4,\eta^2 B_6,\eta^2 B_7\},
\end{equation}
where \begin{itemize}
\item $M\in \pi_{192}\tmf$ is the periodicity element at prime $2$ such that $\varphi(M)=\Delta^8$,
\item  $B_k \in \pi_{24k+8}\tmf$,  such that $\varphi(B_k)=c_4 \Delta^k$,
\item $\eta=[S^1]\in \pi_1\tmf$ and $\eta_k\in \pi_{24k+1}\tmf$ are such that 
$\eta_k B=\eta B_k$.
\end{itemize}
The last relationship means that $\varphi(\eta_k)=\eta \Delta^k$,
implying that the image of $\varphi$ on $\pi_{8k+1}\tmf$ and $\pi_{8k+2}\tmf$
is contained in $\eta \mf_{4k}$ and $\eta^2\mf_{4k}$, respectively.
This was what we wanted to prove.

\end{proof}

\begin{rem}
In fact the information above completely determines the image.
The cokernel is spanned by $\eta\Delta^{8k+i}$ with $i=2,3,5,6,7$ 
and $\eta^2\Delta^{8k+j}$ with $j=3,6,7$.
Assuming the Stolz-Teichner conjecture,
this implies the absence of supersymmetric field theories whose mod-2 Witten genera are given by these elements.
\end{rem}

\begin{rem}
It is to be noted that the image of $\varphi$ is contained in $\eta^2 \times \MF_{4k}$ but not necessarily in $\eta^2  \times \varphi(\pi_{8k}\TMF)$.
For example, when $8k=24$, $\eta^2 \Delta$ is in the image, but $\Delta$ itself is not;
the smallest multiple of $\Delta$ which is in the image is $24\Delta$, as discussed in \cite{Hopkins2002}.
\end{rem}

\begin{rem}
We also mentioned at the end of Sec.~\ref{subsec:statements} that the compatibility of our first statement and the second statement requires that 
any element $x\in \pi_{24k+2}\tmf$ such that $\varphi(x)=\eta^2\Delta^k$ 
should satisfy $\eta x\neq 0 \in \pi_{24k+3}\tmf$.
This can be checked directly.
Note that  such $x$ are $\eta_i \eta_j$ where $i,j=0,1,4$
with the understanding that $\eta_0:=\eta$.
Now, the Proposition 9.34 (3) of \cite{BrunerRognes} states that $\eta \eta_i\eta_j 
%=\frac{d_{7-i-j}}{2}\nu_{i+j} 
\in \pi_{24(i+j)+3}\tmf$,
and is the unique nontrivial element of order 2 in that degree.
%where $d_n=8/\mathrm{gcd}(n,8)$, 
This is what we wanted to confirm.
\end{rem}

\subsection{An alternative proof}

The proof above was short but purely computational, and therefore did not provide much insight. 
Here we give an alternative proof, which relies on the following Lemma \ref{lem:e8}
whose proof is still computational in this paper, but could hopefully be done more conceptually, due to a physics reason we explain below.

\begin{lem}
\label{lem:e8}
Elements of $\pi_{8k+3}\TMF $ are annihilated by
the multiplication by the element $ e_8 \in \pi_{-16}\TMF$, 
where $e_8 \in \pi_{-16}\TMF$ is the element which is uniquely determined by the condition $\varphi(e_8)=c_4/\Delta$.
\end{lem}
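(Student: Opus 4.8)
The plan is to prove Lemma \ref{lem:e8} by the same kind of argument used for the Second Statement, i.e.\ by reading off the structure of $\pi_\bullet\tmf$ modulo $B$-power torsion from \cite[Chapter 9]{BrunerRognes}, localized at the prime $2$, and then inverting $\Delta^{24}$ at the end. First I would note that $e_8\in\pi_{-16}\TMF$ is the image of a class supported by $M^{-1}B^2$ up to the periodicity element, so multiplication by $e_8$ on $\tmf$-modules is, up to the invertible $M$, the same as multiplication by $B^2$; hence it suffices to show that every element of $\pi_{8k+3}\tmf$ is killed by $B^2$ (equivalently, lies in the $B$-power torsion, so that already a bounded power of $B$ kills it, and then $e_8$-multiplication lands in the part where $\varphi$ is injective and one compares $q$-expansions). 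Concretely, $\varphi(e_8\cdot x)=\frac{c_4}{\Delta}\varphi(x)$, and since $\pi_{8k+3}\tmf$ maps to $\pi_{8k+3}\KO((q))=0$ under $\varphi$, one only has to control the $B$-power-torsion-free summand.

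Next I would invoke the explicit description of $\pi_\bullet\tmf$ modulo $B$-power torsion in degrees $\equiv 3 \bmod 8$ from \cite{BrunerRognes} (the analogue of Theorem 9.26 used above, in degree $8k+3$). The point is that in these degrees the torsion-free-mod-$B$ part is generated over $\bZ/2[M,B]$ by classes of the form $\eta\eta_i\eta_j$ (with $\eta_0:=\eta$), exactly the elements appearing in the last Remark of Subsec.~\ref{subsec_proof_second}; all of these satisfy $B\cdot(\eta\eta_i\eta_j)=\eta_i\eta_j\cdot\eta B=\eta_i\eta_j\cdot\eta_{?}B$ and one checks from the relations $\eta_kB=\eta B_k$ together with $\eta\cdot B_k$-relations that $B^2$ annihilates them (or more simply, that $\eta$-divisible classes in these degrees are $B$-power torsion, which is already visible in the Adams chart). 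Therefore every element of $\pi_{8k+3}\tmf$ is $B$-power torsion, hence maps to $0$ under any iterated $B$-multiplication, and in particular $e_8\cdot x$ has $\varphi(e_8 x)=\frac{c_4}{\Delta}\varphi(x)$ with $\varphi(x)$ forced to vanish; pulling this back through the (injective on the torsion-free-mod-$B$ quotient) map $\varphi$ after multiplying by enough powers of $B$ gives $e_8\cdot x=0$.

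The main obstacle I expect is precisely the bookkeeping in the previous step: one must be careful that ``$B$-power torsion'' in \cite{BrunerRognes} is stated modulo the relevant filtration, and that the passage from ``killed by some power of $B$'' to ``killed by $e_8$'' is clean. The cleanest route is probably: (i) show $\pi_{8k+3}\tmf$ is entirely $B$-power torsion by the charts of \cite[Chapter 9]{BrunerRognes}; (ii) observe that $e_8=M^{-1}B^{\text{(something)}}\cdot(\text{unit})$ after inverting $\Delta^{24}$ in $\pi_\bullet\TMF$, so that $e_8$-multiplication factors through a high enough power of $B$; (iii) combine to conclude $e_8\cdot\pi_{8k+3}\TMF=0$. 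A parallel, more conceptual line — which the authors flag as the hoped-for future proof — would instead derive the lemma from the First Statement: the vanishing $\eta\cdot e_8 = 0$-type relations and the description of $A_n$ via the cofiber $\TMF/\eta$ should force $e_8$ to annihilate the $\eta$-torsion-free part, but making that precise without the Bruner–Rognes data is exactly the gap we leave open.
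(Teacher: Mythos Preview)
Your step (i) is correct and matches the paper: Theorem 9.26 of \cite{BrunerRognes} indeed shows that $\pi_{8k+3}\tmf$ consists entirely of $B$-power torsion. The gap is in step (ii). The element $e_8$ is \emph{not} of the form $M^{-1}B^{\text{something}}\cdot(\text{unit})$, and multiplication by $e_8$ does \emph{not} factor through a power of $B$. The classes $B_k\in\pi_{24k+8}\tmf$ with $\varphi(B_k)=c_4\Delta^k$ are separate generators, not powers of $B$; in particular $B_7\neq B^{22}$ (they have the same degree $176$, but $\varphi(B^{22})=c_4^{22}\neq c_4\Delta^7$). So ``$B$-power torsion'' alone does not imply ``killed by $e_8$'', and your argument that $e_8\cdot x$ vanishes because $\varphi(x)=0$ and $\varphi$ is injective on the $B$-torsion-free quotient does nothing: $e_8\cdot x$ could a priori still be a nonzero $B$-power torsion class.

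The missing ingredient, which the paper supplies, is Corollary 9.55 of \cite{BrunerRognes}: every $B$-power torsion element is annihilated by $B_7$. Since $\varphi(e_8)=c_4/\Delta=c_4\Delta^7/\Delta^8=\varphi(B_7)/\varphi(M)$ and $M$ is invertible in $\TMF$, one has $e_8=B_7/M$, and the lemma follows. In other words, the paper's proof is exactly your step (i) plus this one additional fact about $B_7$ (not about powers of $B$); once you replace your incorrect (ii) with this, the argument is complete.
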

\begin{proof}
Theorem 9.26 of \cite{BrunerRognes} says that all elements of $\pi_{8k+3}\TMF$ are $B$-power torsion.
Corollary 9.55 of the same reference says that any $B$-power torsion is annihilated by the multiplication by $B_7$.
As $\varphi(e_8)=c_4/\Delta$, $\varphi(B_7)=c_4\Delta^7$, and the periodicity element $M$ at prime 2 satisfies $\varphi(M)=\Delta^8$, our lemma follows.
\end{proof}

\begin{proof}[An alternative proof of the Second Statement]
Let $x\in \pi_{8k+2}\TMF$. 
From the lemma above, we have  $\eta e_8 x=0\in \pi_{8(k-2)+3}\TMF$.
Our First Statement is then applicable to $e_8 x\in \pi_{8(k-2)+2}\TMF$, 
showing that $\varphi(e_8 x)$ 
is the mod-2 reduction of an element $f$ in $\MF_{4(k-2)+2}$.
As $f$ is divisible by $c_6$,
and $c_4/c_6=1$ mod 2, 
$\varphi(e_8 x)$ is also the mod-2 reduction of an element $(c_4/c_6)f\in \MF_{4(k-2)}$.
Therefore $\varphi(x)$ is the mod-2 reduction of an element $(\Delta/c_6)f\in \MF_{4k}.$
\end{proof}

\begin{rem}
The proof of Lemma \ref{lem:e8} can be strengthened to prove the following:
\end{rem}
\begin{prop}
\label{prop:e8x}
The kernel of $\varphi:\pi_n\TMF\to \pi_n \KO((q)) $ is annihilated by
the multiplication by the element $e_8 \in \pi_{-16}\TMF$.
\end{prop}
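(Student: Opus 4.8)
The plan is to run the argument of Lemma~\ref{lem:e8} essentially verbatim, with the class of $B$-power torsion elements enlarged to all of $\ker\varphi$. Recall that the proof of Lemma~\ref{lem:e8} uses only that every $B$-power torsion element of $\pi_\bullet\TMF$ is killed by $B_7\in\pi_{176}\TMF$ (Corollary~9.55 of \cite{BrunerRognes}) together with $e_8=M^{-1}B_7$ in $\pi_{-16}\TMF$, where $M\in\pi_{192}\TMF$ is the invertible $2$-primary periodicity element with $\varphi(M)=\Delta^8$; these give $e_8\cdot x=M^{-1}(B_7 x)=0$ for any $B$-power torsion $x$, regardless of the degree of $x$. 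So Proposition~\ref{prop:e8x} follows once one knows that $\ker\varphi$ is contained in the ideal of $B$-power torsion elements of $\pi_\bullet\TMF$ --- the reverse containment being immediate since $\varphi(B)=c_4$ is invertible in $\pi_\bullet\KO((q))$. (We argue $2$-locally, as in Sec.~\ref{subsec_proof_second}; the $3$-local case is parallel using the $3$-primary analogue of \cite{BrunerRognes}, and $\varphi\otimes\mathbb{Q}$ is injective.)

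To prove $\ker\varphi\subseteq\{B\text{-power torsion}\}$, I would show that $\varphi$ is injective on the quotient $\pi_\bullet\TMF/(B\text{-power torsion})$, reading its structure off Theorem~9.26 of \cite{BrunerRognes} (localizing at $2$ and inverting $\Delta^{24}$ at the end). In degrees $\equiv 1$ and $\equiv 2\bmod 8$ this quotient is the free $\bZ/2[M,B]$-module on the generators listed there, and exactly the computation from the proof of the Second Statement shows that the monomials $M^aB^bg_i$ are carried by $\varphi$ to the pairwise distinct, $\bZ/2$-linearly independent classes $\eta\Delta^{8a+p_i}$, resp.\ $\eta^2\Delta^{8a+p_i}$, where $p_i\in\{0,\dots,7\}$ is the $\Delta$-exponent of $\varphi(g_i)$ and one uses $c_4\equiv 1\bmod 2$. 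In degrees $\equiv 0$ and $\equiv 4\bmod 8$ the quotient is torsion-free and $\varphi$ identifies it with a subgroup of the free abelian group of integral modular forms, so injectivity is clear; in the remaining degrees $\equiv 3,5,6,7\bmod 8$ the quotient vanishes (so $\pi_n\TMF$ is entirely $B$-power torsion and there is nothing to check). Hence $\ker\varphi$ is exactly the ideal of $B$-power torsion elements, and Proposition~\ref{prop:e8x} follows.

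The part I expect to be the real work is the injectivity of $\varphi$ on $\pi_\bullet\TMF/(B\text{-power torsion})$ in the even degrees $\equiv 0,4\bmod 8$: one must be sure that no class of infinite order is killed by $\varphi$, and at present this has to be extracted from the tables of \cite{BrunerRognes} rather than from a soft argument, consistently with the computational character already acknowledged for Lemma~\ref{lem:e8}. A conceptual replacement would be a direct proof that the $B$-inverted localization $\TMF[B^{-1}]\to\KO((q))$ (note that $\varphi(B)=c_4$ is already invertible in $\pi_\bullet\KO((q))$, so the target is unchanged) is injective on homotopy; proving this --- for instance by analyzing the descent spectral sequence computing $\pi_\bullet\TMF$ --- would simultaneously give a conceptual proof of Lemma~\ref{lem:e8}, and we leave it open.
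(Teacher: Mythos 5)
Your proposal follows the paper's own proof route essentially step for step: localize at $p=2$ (with the $p=3$ case handled in parallel, and $p\ge 5$ trivial), show that $\ker\varphi$ equals the ideal $\Gamma_B$ of $B$-power torsion by reading the structure of $\pi_\bullet\TMF/\Gamma_B$ off Theorem~9.26 of \cite{BrunerRognes}, and then invoke Corollary~9.55 ($B_7$ kills $\Gamma_B$) together with $e_8 = M^{-1}B_7$. The paper's proof phrases the key computational step tersely --- ``by inspection its generators all map to nonzero elements in $\pi_\bullet\KO((q))$'' --- whereas you correctly flag that what is actually needed is injectivity of $\varphi$ on the quotient, and you supply the mod-$2$ linear-independence argument in degrees $\equiv 1,2\bmod 8$ and observe that torsion-freeness of the quotient plus the rational injectivity of $\varphi$ settles degrees $\equiv 0,4\bmod 8$; this is the same argument as the paper's, just written out more carefully.

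Two small points worth registering. First, when you claim the monomials $M^aB^bg_i$ map to pairwise distinct classes $\eta\Delta^{8a+p_i}$, the reason $b$ does not collapse images is that in a fixed total degree $n$ the constraint $24a+b+3p_i=(n-1)/8$ (with $p_i\in\{0,\dots,7\}$) makes $(a,p_i)$ --- hence $8a+p_i$ --- determine $b$, so the association $(a,b,p_i)\mapsto 8a+p_i$ is injective within each degree; this deserves to be said since $c_4\equiv 1\bmod 2$ erases the $b$-dependence in the image. Second, the paper explicitly records the $3$-primary ingredients you defer to (Theorems~13.18--13.19 of \cite{BrunerRognes} and $e_8=B_2/H$ where $H$ is the $3$-primary periodicity), so your ``parallel'' remark is correct but you would want to cite these specifically.
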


\begin{proof}
In this proof the numbered theorems and corollaries refer to those in \cite{BrunerRognes}.
The map $\varphi$ is injective when rationalized,
and therefore we can discuss torsion elements for each $p$ separately.
For $p\ge 5$, there is no $p$-torsion in $\pi_\bullet\TMF$, and there is nothing to prove.
Let $\Gamma_B \pi_\bullet\TMF$ be the subgroup of $B$-power torsion elements of $\pi_\bullet \TMF$.
As $\varphi(B^k x)=\varphi(B)^k \varphi(x)$ and $\varphi(B)=c_4$ is invertible in $\pi_\bullet\KO((q))$, 
$\Gamma_B\pi_\bullet\TMF$ is in the kernel of $\varphi$.
Now, the structure of $\pi_\bullet\TMF/\Gamma_B\pi_\bullet\TMF$
is given in Theorem 9.26 (for $p=2$) and in Theorem 13.18 (for $p=3$),
and by inspection its generators all map to nonzero elements in $\pi_\bullet\KO((q))$.
Therefore any element in the kernel $\varphi$  is in $\Gamma_B\pi_\bullet\TMF$.
Now they are all annihilated by $B_7$ for $p=2$ by Corollary 9.55
and by $B_2$ for $p=3$ by Theorem 13.19.
As $e_8$ is $B_7/M$ for $p=2$ and is $B_2/H$ for $p=3$, where $M$ and $H$ are the periodicity elements for $p=2$ and $3$ respectively.
Therefore they are all annihilated by $e_8$.
\end{proof}

\begin{rem}
Let us explain why the authors would hope that a conceptual proof of Lemma~\ref{lem:e8} 
or its generalization Proposition~\ref{prop:e8x} might be found.
Here we utilize the twists of string bordisms and $\TMF$ by $H^4(-,\bZ)$,
which was described mathematically in \cite{ABG}
and whose physics interpretation was given in \cite[Appendix C]{Tachikawa:2021mby}.

Let $\nu \in \pi_3 \TMF$ be a generator of $ \pi_3 \TMF \simeq \bZ/24$.
In physics, the relation $e_8 \nu=0$ can be argued as follows.
On one hand, 
$\nu$ is the class of $S^3$ with the string structure coming from the trivialization of its tangent bundle coming from its Lie group structure.
This is not string null bordant, but has a spin null bordism, via $S^3=\partial B^4$.
In fact this null bordism can be promoted into a $BE_8$-twisted string null bordism,
where the twist is given by the generator $\tau$ of $H^4(BE_8,\bZ)=\bZ$,
by equipping $B^4$ with an $SU(2)\subset E_8$ bundle $P$ whose second Chern class $c_2(P)$ is 
supported away from the boundary, or in other words in the image of $H^4(B^4, S^3, \bZ)$, and such that $\int_{B^4}c_2(P)=1$.
This means that $\nu$ becomes null when it is sent via \begin{equation}
\iota_*: \TMF_3(pt)\to \TMF_{3+\tau}(BE_8).
\end{equation}

On the other hand,
the element $e_8\in \pi_{-16}\TMF\simeq \TMF^{16}(pt)$ is believed 
to have a realization in terms of a holomorphic vertex operator algebra $(e_8)_1$
under the Stolz-Teichner conjecture,
where the subscript $1$ specifies the level $\in H^4(BE_8,\bZ)=\bZ$.
For example, $\varphi(e_8) \in \MF_{-8}$ equals $\eta(q)^{-16}$ times the character of the unique integrable representation of the affine Lie algebra $(e_8)_1$;
this motivated our choice of the symbol $e_8$ for this element of $\pi_{-16}\TMF$.
In particular, this means that $e_8$ should lift to an element $\hat e_8$
of the $E_8$-equivariant version of $\TMF$ at level 1.
By going to the Borel equivariant version, this means that the element
$e_8\in \TMF^{16}(pt)$ should have a lift along \begin{equation}
\iota^*: \TMF^{16+\tau}(BE_8)  \to  \TMF^{16}(pt) 
\end{equation}
to an element $\hat e_8 \in \TMF^{16+\tau}(BE_8)$.

These two physics facts combined show that 
\begin{equation}
e_8 \nu=\iota^*(\hat e_8)\nu= \hat e_8\iota_*(\nu)=0.
\end{equation}
String theorists say that $\nu$ is the 3-sphere surrounding the heterotic NS5-brane,
that $e_8$ is the part of the worldsheet current algebra.
Then the fact that there exists the $E_8$-twisted string null bordism of $\nu$ leading to $e_8\nu=0$ means that 
the heterotic NS5-brane can be realized as an  $E_8$ instanton.
This property was first found in \cite{Witten:1996qz}.

The authors consider Proposition~\ref{prop:e8x}
as a natural mathematical generalization of this observation $e_8 \nu=0$,
by regarding $\nu$ as an example in the kernel of $\pi_\bullet\TMF\to \pi_\bullet\KO((q))$.
In this paper we only provided a computational proof for this proposition.
The authors have a way to lift not to $BE_8$ but to $K(\bZ,4)$ 
using the Anderson duality of $\TMF$, which will be explained elsewhere.\footnote{
A physical interpretation of this element $x$ of $\TMF^{16+\tau}(K(\bZ,4))$ is not yet clear.
Most naively, one might first hope that if $K(\bZ,4)$ has an $E_8$-bundle $P$ 
with the characteristic class $c_2(P)$ given by the fundamental class of $K(\bZ,4)$,
then we could pull back $\hat e_8 \in \TMF^{16+\tau}(BE_8)$ to $\TMF^{16+\tau}(K(\bZ,4))$
by the classifying map $f: K(\bZ,4) \to BE_8$,
which would be this element $x$.
However, this possibility 
is excluded as follows. We can consider a fibration $F \to BE_8 \to K(\bZ,4)$ where $g:BE_8 \to K(\bZ,4)$ 
is the map determined by the generator $\tau$ of $H^4(BE_8,\bZ)\simeq\bZ$,
and $F$ is the homotopy fiber such that $\pi_n(F)=0$ for $n<16$ and $\pi_{16}(F)\simeq \bZ$. If the map $f: K(\bZ,4) \to BE_8$ exists,
the composition $K(\bZ,4) \to BE_8 \to K(\bZ,4)$ is homotopic to the identity.
In this case, the Serre long exact sequence for the fibration gives a split short exact sequence
$0 \to H^{n}(K(\bZ,4) ) \to H^{n}(BE_8) \to H^{n}(F) \to 0$ (where $n<4+16-1$) with arbitrary coefficients. 
This contradicts with the fact that the transgression $H^{16}(F) \to H^{17}(K(\bZ,4))$ is nontrivial  for $\bZ/3$ coefficients \cite[p.259, Corollary]{Araki} and for $\bZ/2$ coefficients \cite[Lemma 1]{BE82}.
Therefore $f$ cannot exist.
It is possible to pull-back  $x\in\TMF^{16+\tau}(K(\bZ,4))$ to $\TMF^{16+\tau}(BE_8)$ by $g:BE_8\to K(\bZ,4)$,
but the resulting element $g^*(x)$ does not agree with $\hat e_8$ either.
This is because the image of $g^*(x)$ in $H^\bullet(BE_8,\bR)$
is a sum of powers of $\tau$ only, whereas the rationalization of $\hat e_8$ contains other characteristic classes in $H^\bullet(BE_8,\bR)$.
}

\end{rem}

\begin{rem}
\label{rem:phys}
Here is a physics interpretation of the crucial Lemma~\ref{lem:e8} and its generalization Proposition~\ref{prop:e8x}.
Suppose $M$ is a string manifold, which is the boundary of a spin manifold $N$, 
so that $M=\partial N$.
Suppose that the tangent bundle of $N$ is such that %it can be embedded to $E_8$, so that 
we can solve $dH=\frac{p_1}{2}(TN)-c_2(E_8)$ on $N$,
where $c_2$ is the instanton number of the $E_8$ bundle.
Assume furthermore that the $E_8$ bundle at the boundary $M$ is trivial.
A prototypical example is when $M=S^3$ with a unit $H$-flux.
Then we can simply take $N=B^4$, which we equip with the one-instanton configuration of $E_8$.
This is the realization of the heterotic NS5-brane as an $E_8$-instanton,
as originally found in \cite{Witten:1996qz}.

Now let us fiber the $E_8$ level 1 left-moving current algebra $(E_8)_1$ over $N$,
and perform the quantization over $N$.
This realizes a non-compact SQFT
whose boundary is the compact SQFT given by the product of $(E_8)_1$ and the sigma model on $M$,
meaning that $(E_8)_1M$ is null bordant in the space of SQFT.

What Proposition \ref{prop:e8x} says physically is that
this property holds more generally:
given an SQFT $x$ such that the ordinary or mod-2 elliptic genus of $x$ vanishes,
$(E_8)_1 x$ is null bordant in the space of SQFT.
It would be nice to have a physics-based derivation of Proposition \ref{prop:e8x} in general.

\end{rem}

\section*{Acknowledgments}
The authors thank A. Ishige for pointing out an error in an earlier manuscript,
and two anonymous referees for valuable comments which allowed the authors to significantly improve the quality of the article.
YT is supported in part  
by WPI Initiative, MEXT, Japan at Kavli IPMU, the University of Tokyo
and by JSPS KAKENHI Grant-in-Aid (Kiban-S), No.16H06335.
MY is supported by Grant-in-Aid for JSPS KAKENHI Grant Number 20K14307 and JST CREST program JPMJCR18T6.
KY is supported in part by JST FOREST Program (Grant Number JPMJFR2030, Japan), MEXT-JSPS Grant-in-Aid for Transformative Research Areas (A) ``Extreme Universe'' (No. 21H05188), and JSPS KAKENHI (17K14265).

\appendix

\section{Mod-2 index and $\KO_n(pt)$ in supersymmetric quantum mechanics}\label{sec:cliff}
For completeness, we explain more details about the mod-2 index and $\KO_n(pt)$ in the context of 
$\mathcal{N}=1$ supersymmetric quantum mechanics (SQM) by expanding on the idea in \cite{Gaiotto:2019gef} about spectator fermions. See also \cite{Gukov:2018iiq}. 
In this appendix $\bZ_n$ is the additive group of integers modulo $n$.

\paragraph{When the time reversal is non-anomalous:}

First let us consider the case that there is no anomaly in time reversal symmetry $T$
and the fermion parity $(-1)^F$. There are two possibilities for time reversal symmetry, $T^2=1$ or $T^2=(-1)^F$.
We are interested in the former $T^2=1$ since the (non-anomalous) CPT symmetry of 2d QFT gives such $T$ under dimensional reduction on a circle $S^1$ because of the following reasons. 
In general, let $\mathrm{Spin}(d,1)$ be the spin version of the Lorentz group in $(d+1)$-dimensions 
which is a double cover of $\mathrm{SO}(d,1)$.
If we compactify on $S^1$, there is a subgroup $\mathrm{O}(d-1,1) \subset \mathrm{SO}(d,1)$
such that orientation reversals in $\mathrm{O}(d-1,1) $ are realized as rotations in $\mathrm{SO}(d,1)$ involving $S^1$.
The corresponding double cover of $\mathrm{O}(d-1,1) $ is called $\mathrm{Pin}^-(d-1,1) \subset \mathrm{Spin}(d,1)$.
If we consider a ``$\pi$-rotation'' $R_{01}$ which flips the time direction and the $S^1$ direction,
we have $(R_{01})^2=1$ rather than $(-1)^F$, since the ``rotation'' involves the time direction. 
This corresponds to the time reversal $T$ in $\mathrm{Pin}^-(d-1,1)$.
%\footnote{
%The relevant Lorentz group with the time reversal symmetry is 
%$\mathrm{Pin}^-(d-1,1)$ in the case of $d$-dimensional Lorentz signature spacetime. 
%It is a double
%cover of $\mathrm{O}(d-1,1)$ such that the reflection $R_{n}$ in a direction $n = (n^\mu)$ with $n^\mu n_\mu = \pm 1$
%satisfies $(R_{ n})^2=(- n^\mu n_\mu)^F$.
%The version  in the Euclidean signature is $\mathrm{Pin}^-(d)$ in which reflections satisfy $(R_n)^2=(-1)^F$.
%See \cite{Kapustin:2014dxa,Witten:2015aba} for explanations. The group can be embedded in $\mathrm{Spin}(d,1)$
%as $\mathrm{Pin}^-(d-1,1) \subset \mathrm{Spin}(d, 1)$ because we can realize reflections in $d$-dimensions
%as rotations in $(d+1)$-dimensions involving the extra dimension. 
%We are interested in the case $d=1$.
%} 

In the absence of anomalies, $T$ and $(-1)^F$ are represented in the Hilbert space of SQM in the way dictated by 
$\mathrm{Pin}^-(0,1) \simeq \mathbb{Z}_2 \times \mathbb{Z}_2$.
Adding the supercharge $Q$,\footnote{Mathematically, $Q$ is a self-adjoint operator which we assume to have a discrete eigenvalue spectrum with finite dimensional eigenspaces for simplicity in this appendix.
$(-1)^F$ is a unitary operator, and $T$ is an anti-unitary operator. 
For instance, if we consider a sigma model whose target space is a closed spin manifold $M$, then $Q$ is a Dirac operator on $M$, $(-1)^F$ is a $\mathbb{Z}_2$-grading
in the Clifford module bundle on whose sections the Dirac operator acts,
and $T$ is roughly speaking the complex conjugation of sections. 
} 
the algebra is
\beq\label{eq:nonanomalous}
T^2=1, \quad ((-1)^F)^2=1, \quad (-1)^FT = T(-1)^F,  \quad TQ = QT, \quad (-1)^FQ = - Q(-1)^F.
\eeq
The transformation of $Q$ as $TQT^{-1}=Q$ is a choice of convention. If $TQT^{-1}=-Q$, we can modify $T$ to $(-1)^FT$
so that the new $T$ satisfies $TQT^{-1}=Q$.
The operator $T$ is an anti-linear operator acting on the Hilbert space with $T^2=1$, 
so the Hilbert space $\cH$ has a real structure,
\beq
\mathcal{H}=\mathcal{H}_{\mathbb{R}} \otimes_{\mathbb{R}} \mathbb{C},
\eeq
where $\mathcal{H}_{\mathbb{R}}$ is the subspace of $\mathcal{H}$ that is invariant under $T$, which  is a real vector space. 

\paragraph{The introduction of spectator fermions:}

When the theory has time reversal anomalies, let us introduce free spectator fermions so that the total anomaly is zero.
The time reversal anomalies are known to be classified by $\mathbb{Z}_8$, 
and a single Majorana fermion has the unit anomaly $1 \in \mathbb{Z}_8$~\cite{Fidkowski:2009dba,Kapustin:2014dxa,Witten:2015aba}.
In $\mathcal{N}{=}1$ SQM, a free Majorana fermion that is invariant under the supercharge $Q$ is possible.
(In the off-shell formulation of the supersymmetry, its superpartner is the auxiliary field without any degrees of freedom, and we set it to zero.)
When we are given an SQM theory, we add $n$ copies of such free Majorana fermions $\psi_i~(i=1,\cdots,n)$
so that the total theory has a non-anomalous time reversal symmetry. In the total theory,
$T, (-1)^F$ and $Q$ satisfy the above algebra \eqref{eq:nonanomalous}, and $\psi_i$ satisfy
\beq
\psi_i\psi_j+\psi_j\psi_i=2\delta_{ij}, \quad T\psi_i=\psi_i T, \quad (-1)^F\psi_i=-\psi_i(-1)^F, \quad Q\psi_i = - \psi_i Q.
\eeq
The first relation is the standard canonical anti-commutation relation of fermions. 
The second relation specifies the transformation of $\psi_i$
under the time reversal symmetry. A fermion with $T\psi=-\psi T$ is of course possible, but we have chosen our spectator fermions
to transform as above. Notice that the sign here is not a matter of convention but has a physical meaning since we have already fixed the convention
by the transformation of the supercharge $TQ = QT$ as discussed above.
The relation $T\psi_i=\psi_i T$ implies that $\psi_i$ is real in the sense that it 
preserves $\mathcal{H}_\mathbb{R}$.
Thus the above algebra implies that $\mathcal{H}_\mathbb{R}$
is a real representation of the $\mathbb{Z}_2$-graded Clifford algebra $\mathrm{Cliff}_{n}^{\bZ_2}$, 
where the $\mathbb{Z}_2$-grading is given by $(-1)^F$. Alternatively, we may define 
$\psi_0:=(-1)^F$
and regard $\psi_i~(i=0,\cdots, n)$ as the ungraded Clifford algebra with $n+1$ generators.
%We denote the ungraded Clifford algebra with $n+1$ generators as $\mathrm{Cliff}_{n+1}$.

\paragraph{Hilbert space with and without spectator fermions:}
The Clifford algebra is convenient for general treatment of invariants of SQM. 
However, before studying that point, let us first discuss the relation to the algebras without any spectator fermions.
%For simplicity, we restrict our attention to the case of even $n$, although similar discussions are possible for odd $n$.
First we consider the case of even $n$. 
We define annihilation and creation operators as
\beq\label{eq:annihilation}
a_j= \frac{1}{2}(\psi_{2j-1}+ i \psi_{2j}), \quad a_j^\dagger= \frac{1}{2}(\psi_{2j-1}- i \psi_{2j})\qquad (j=1,\cdots, n/2)
\eeq
Then we may regard the subspace of the Hilbert space annihilated by $a_j$ as the Hilbert space
of the original theory without spectator fermions. Let $\mathcal{H}'= \{ v \in \mathcal{H}\,|\, a_j v=0~(j=1,\cdots,n/2)\}$ be this subspace.
However, this space is not invariant under $T$ since $T$ is anti-linear and hence $T a_j =  a_j^\dagger T$. 
Thus we need to modify $T$
if we want to obtain a time reversal symmetry of the original theory without $\psi_i$. This is done by defining 
\beq\label{eq:newT}
T' = T \cdot  \prod_{j=1}^{n/2} \psi_{2j-1} .
%T' := \left( \prod_{j=1}^{n/2} \psi_{2j-1} \right) T.
\eeq
One can check that $ T' a_j = \pm a_j T' $ and hence $\cH'$ is invariant under $T'$.
We may regard this as a time reversal symmetry of the original theory. Then we get the algebra
\beq\label{eq:even_alg}
n=2m : \quad T'^2=(-1)^{\frac12 m(m-1)},  \quad (-1)^F T' = (-1)^{m} T'(-1)^F,  \quad T'Q = (-1)^{m}  QT'.
\eeq
as well as the usual $((-1)^F)^2=1$ and $(-1)^FQ = - Q(-1)^F$. When $n=2 \mod 8$,
we reproduce \eqref{bosh}. When $n=4 \mod 8$, we obtain $T'^2=-1$ and hence
the Hilbert space $\mathcal{H}'$ has a pseudo-real structure (i.e., it is a quaternionic vector space) and 
in particular the dimension of the Hilbert space (or more precisely finite dimensional subspaces representing the algebra)
is always even. This is known as the Kramers degeneracy.  

For odd $n$, in addition to \eqref{eq:annihilation} for $j=1,\cdots, (n-1)/2$, we introduce additional
annihilation and creation operators
$b=\frac{1}{2} (a_n + i (-1)^F)$ and  $b^\dagger=\frac{1}{2} (a_n - i (-1)^F)$,
and consider the subspace $\cH'$ which is annihilated by $b$ as well as $a_j$. 
Notice that $(-1)^F$ is ill-defined on this subspace; this is a well-known anomaly of $(-1)^F$ for odd $n$.
We define
\beq\label{eq:newT2}
T' = T\cdot  \prod_{j=1}^{(n+1)/2} \psi_{2j-1} 
%T' := \left( \prod_{j=1}^{(n+1)/2} \psi_{2j-1} \right) T
\eeq
and get
\beq\label{eq:odd_alg}
n=2m-1: \quad T'^2=(-1)^{\frac12 m(m-1)},   \quad T' Q = (-1)^{m}  Q T'.
\eeq

\paragraph{Clifford algebras and their modules:}
Let us come back to the Clifford algebras. If we consider a subspace of the Hilbert space on which $Q^2$
has an eigenvalue $Q^2=E >0$, we can define $\psi_{n+1}:=Q/\sqrt{E}$ on this subspace.
As a result, we get a Clifford algebra
$\psi_i \psi_j+\psi_j\psi_i=2\delta_{ij}$ for $i=0,\cdots, n+1$,
where $\psi_0=(-1)^F$ as defined above. This is the Clifford algebra $\mathrm{Cliff}_{n+1}^{\bZ_2}$.
Therefore, we are naturally led to consider the abelian group described in the next paragraph when we try to consider deformation
invariants of the Hilbert space of SQM. Exactly the same problem was studied in mathematics 
by Atiyah-Bott-Shapiro~\cite{Atiyah:1964zz}.\footnote{
In fact, a very closely related setup as in SQM has been discussed purely mathematically by Atiyah and Singer~\cite{Fredholm}. 
They considered the $\bZ_2$-graded Hilbert space $\cH_\mathbb{R}$ where the $\bZ_2$-grading is $(-1)^F$ in our notation, the action of the Clifford algebra
generated by $\psi_i$ (or skew-adjoint generators $J_i =(-1)^F \psi_i$ in their case),
and the space of self-adjoint Fredholm operators $\widetilde{Q}$ which are odd under the $\bZ_2$-grading and anticommute with 
the generators $\psi_i$ (or skew-adjoint operators $B=(-1)^F\widetilde{Q}$ in their case). 
The space of such $\widetilde{Q}$ is the classifying space for $\KO^{-n}$. 
More precisely, they considered the space of bounded operators with the norm topology, while  supercharges $Q$ of SQM are typically unbounded
and not norm-continuous under ``continuous'' deformations. 
This might be avoided by defining a bounded operator associated to $Q$ as e.g. $\widetilde{Q} = Q(Q^2+1)^{-1/2}$, and using the topology 
discussed in \cite[Definition~3.2]{Atiyah:2004jv}. (A standard assumption in ``compact'' quantum mechanics is that $e^{-\beta H}$, where
$H=Q^2$ in SQM and $\beta>0$, is a trace class operator  
which in particular implies that $1-\widetilde{Q}^2=(Q^2+1)^{-1}$ is a compact operator.)
}
 
The set of (isomorphism classes of) real representations 
of $\mathrm{Cliff}_{n}^{\bZ_2}$ forms a commutative monoid under the direct sum $\oplus$. 
We denote this commutative monoid as $M_{n}^{\bZ_2}$. There is a homomorphism $M_{n+1}^{\bZ_2} \to M_{n}^{\bZ_2}$
since we can regard $\mathrm{Cliff}_{n}^{\bZ_2} \subset \mathrm{Cliff}_{n+1}^{\bZ_2}$ by forgetting $\psi_{n+1} (=Q/\sqrt{E})$.
On the commutative monoid $M_{n}^{\bZ_2}$, we introduce the equivalence relation generated by $x \sim 0$ if $x \in M_{n}^{\bZ_2}$
actually comes from $ M_{n+1}^{\bZ_2}$.
It turns out that the commutative monoid becomes an abelian group after imposing this equivalence relation. 
We denote this abelian group as $\KO_n(pt)$ or $\KO^{-n}(pt)$,
\beq\label{eq:defKO}
\KO_n(pt):=\KO^{-n}(pt) := M_{n}^{\bZ_2}/M_{n+1}^{\bZ_2}.
\eeq
In the application to SQM, the subspace of $\mathcal{H}_\mathbb{R}$ with the eigenvalue $Q^2=0$
gives an element of $M_{n}^{\bZ_2}$, and other subspaces with $Q^2=E>0$ give elements of $M_{n+1}^{\bZ_2}$.
Thus, for a given SQM, we get an element of $\KO^{-n}(pt)$ which is invariant under continuous deformation of the theory;
even if a subspace with $E>0$ accidentally becomes $E=0$ during continuous deformation,
that subspace does not contribute to $\KO^{-n}(pt)$ since we have divided by $M_{n+1}^{\bZ_2}$.  

\def\C{C}
\def\SR{SR}
\def\PR{PR}

Irreducible representations of the Clifford algebra can be worked out explicitly. 
(See e.g. the appendices of \cite{Weinberg:2000cr,Polchinski:1998rr} for textbook treatments in physics.)
%\footnote{
%The essence of the treatment has already appeared in our discussion above. We consider 
%annihilation and creation operators as in \eqref{eq:annihilation} to construct the complex representation spaces as Fock spaces.
%We also introduce anti-linear operators analogous to \eqref{eq:newT} and \eqref{eq:newT2} and study reality properties of the representations.
%However, we remark that we are interested in
%representations of Clifford algebras rather than spin groups, so one should be careful when 
%consulting \cite{Weinberg:2000cr,Polchinski:1998rr}.})
Irreducible complex representations of $\mathrm{Cliff}_{n}^{\bZ_2}$ 
%or equivalently the $\mathbb{Z}_2$-graded Clifford algebra $\mathrm{Cliff}_{n}^{\mathbb{Z}_2}$,
%without imposing the reality condition, the summary is as follows.
have the structures summarized as follows.
We denote the existence of complex, strict-real, and pseudo-real structures as \C, \SR, and \PR, respectively. 
More precisely, \SR\ (resp.~\PR) means that
there exists an anti-linear operator $U$ with $U^2=1$ (resp. $U^2=-1$) that acts on the complex representation space
and commutes with $\mathrm{Cliff}_{n}^{\bZ_2}$. 
We also denote the complex dimensions of representations
by $\mathrm{dim}_\mathbb{C}$. Then $M_{n}^{\bZ_2}$ contains the following irreducible representations:
\begin{equation}\label{eq:rep_table}
\begin{array}{|c||c|c|c|c|c|c|c|c|c}
\hline
n &0+8k&1+8k&2+8k&3+8k \\ \hline
\text{reality} &\text{\SR}&\text{\SR}&\text{\C}&\text{\PR}\\ \hline
\text{dim}_\mathbb{C} &2^{4k}, 2^{4k}&2^{1+4k}&2^{1+4k}, 2^{1+4k}&2^{2+4k} \\ \hline
\hline
n &4+8k&5+8k&6+8k&7+8k \\ \hline
\text{reality}& \text{\PR}&\text{\PR}&\text{\C}&\text{\SR} \\ \hline
\text{dim}_\mathbb{C} &2^{2+4k}, 2^{2+4k}&2^{3+4k}&2^{3+4k}, 2^{3+4k}&2^{4+4k} \\ \hline
\end{array} \quad .
\end{equation}
One way to check the reality properties may be as follows. In this paragraph we regard the algebra as the ungraded one $\mathrm{Cliff}_{n+1}$
with $n+1$ generators $\psi_i~(i=0,1,\cdots, n)$.
Let us introduce $n+1$ additional Clifford generators $\eta_i$
such that $\eta_i \eta_j+\eta_j\eta_i=-2\delta_{i,j}$ and $\eta_i \psi_j+\psi_j\eta_i=0$
to get the Clifford algebra $\mathrm{Cliff}_{n+1, n+1}$ with signature $(n+1,n+1)$.
(Physically, $\eta_i$ are realized by fermions $\psi'_i$ satisfying $\psi'_i \psi'_j+\psi'_j \psi'_i=+2\delta_{i,j}$ and 
$T\psi'_i=- \psi'_i T$. We define $\eta_i = i \psi'_i$.)
By considering annihilation and creation operators $\frac{1}{2} (\psi_i \pm \eta_i)$, one can see that 
$\mathrm{Cliff}_{n+1, n+1}$ has a unique real irreducible representation of dimension $2^{n+1}$ which is constructed as a Fock space. 
Then we eliminate $\eta_i$ by considering the subspace annihilated by $\frac{1}{2}(\eta_{2j-1} + i \eta_{2j})$
similarly to the discussion following \eqref{eq:annihilation} to get a representation of $\mathrm{Cliff}_{n+1}$. 
Let $U$ be the anti-linear operator for the representation of $\mathrm{Cliff}_{n+1, n+1}$
with $U^2=1$, and let $\epsilon$ be 
$\epsilon = \prod (\psi_i \eta_i)$ which anti-commutes with $\psi_i$ and $\eta_i$ and satisfies $\epsilon^2=1$.
Then we define $U' = U \cdot \prod \epsilon \eta_{2j-1} $ similarly to \eqref{eq:newT} and \eqref{eq:newT2}. 
Notice that $ \epsilon \eta_{i}$ and hence $U'$ commute with $\psi_i$.
For even $n$, $U'$ satisfies the same algebra as \eqref{eq:even_alg} (without $Q$) by replacing $T' \to U'$ and $(-1)^F \to \epsilon \eta_0$.
We further restrict to eigenspaces of $\epsilon \eta_0$ to get two irreducible representations of $\mathrm{Cliff}_{n+1}$.
For odd $n$,  $U'$ satisfies the same algebra as \eqref{eq:odd_alg} (without $Q$) by replacing $T' \to U'$.
The value of $U'^2$, and the (anti-)commutation of $U'$ with $\epsilon \eta_0$ for even $n$, determine the reality properties
of the irreducible representations of $\mathrm{Cliff}_{n+1}$.

Let us also list some basic properties:
\begin{itemize}
\item For even $n$, we have two different irreducible representations of dimension $2^{n/2}$ 
on which $\prod_{i=0}^{n}\psi_i$ acts as $+ i^{n/2} $ and $- i^{n/2} $, respectively. (Notice that $\prod_{i=0}^{n}\psi_i=(-1)^F \prod_{i=1}^{n}\psi_i$ 
is a center of the algebra $\mathrm{Cliff}_{n}^{\bZ_2}$ for even $n$ with $(\prod_{i=0}^{n}\psi_i )^2=(-1)^{n/2}$.)
\item  For odd $n$, there is a unique irreducible representation of dimension $2^{(n+1)/2}$. 
It becomes the direct sum of the two different representations 
for $n-1$ under $M_{n}^{\bZ_2} \to M_{n-1}^{\bZ_2}$.
\item For $n=2 \mod 4$, the two representations are complex conjugates of each other. 
\end{itemize}

When we have a representation with C or PR structure, we obtain the corresponding strict-real representation 
by regarding it as a real vector space with the real dimension $\mathrm{dim}_\mathbb{R} = 2\mathrm{dim}_\mathbb{C}  $. 
For a representation with SR structure, we get a real representation with $\dim_\mathbb{R} = \dim_\mathbb{C}$.

\paragraph{Computations of $\KO^{-n}_G(pt)$:}

We can compute the groups $\KO^{-n}(pt)$ by using \eqref{eq:rep_table}.
For instance, let $V_1$ and $V_2$ be the two real irreducible representations of $\mathrm{Cliff}_{8k}^{\bZ_2}$.
Elements of $M_{8k}^{\bZ_2}$ are of the form $n_1V_1 \oplus n_2 V_2 $ for 
$(n_1, n_2) \in \mathbb{Z}_{\geq 0} \times \mathbb{Z}_{\geq 0}$. Elements of the image $M_{8k+1}^{\bZ_2} \to M_{8k}^{\bZ_2}$
are $n(V_1 \oplus V_2)$ for $n \in \mathbb{Z}_{\geq 0}$. We impose the equivalence relation
$(n_1, n_2) \sim (n_1+n, n_2+n)$ and get $\KO^{-8k}(pt) \simeq \mathbb{Z}$. The integer 
$n_1-n_2 \in \mathbb{Z} \simeq \KO^{-8k}(pt)$ is the usual Witten index.

Other groups can be computed similarly. 
However, it is not difficult to slightly generalize the problem as follows. 
Suppose that the theory has an internal global symmetry $G$ (which is compact)
such that it does not have anomalies except for pure time-reversal anomalies.
In this situation, we regard $M_{n}^{\bZ_2}$ as the commutative monoid of real representations of not only $\mathrm{Cliff}_{n}^{\bZ_2}$
but also $G$, such that the actions of $\mathrm{Cliff}_n^{\bZ_2}$ and $G$ commute.
 At the level of complex representations, we can construct irreducible representations by just taking tensor products
of irreducible representations of $\mathrm{Cliff}_{n}^{\bZ_2}$ and $G$. Notice that when we take the tensor product
of two representations each of which has PR structure, then we get a representation with SR structure. 

Let $R_\mathbb{C}(G)$, $R_\mathbb{R}(G)$ and $R_\mathbb{H}(G)$ be the abelian groups 
of complex, strict-real and pseudo-real representations of $G$, respectively. 
(More precisely, elements of these groups are virtual representations allowing negative coefficients.)
Also let $\KO^{-n}_G(pt)$ be
the abelian group defined as in \eqref{eq:defKO} in the presence of $G$. 
Then by using \eqref{eq:rep_table} and the properties mentioned above,
we get the following results for $\KO^{-n}_G(pt)$ which are known in mathematics (see e.g. \cite{AtiyahSegal} for the case including $G$):
\beq\label{eq:KOG}
\begin{array}{|c||c|c|c|c|c|c|c|c|c}
\hline
\! n~\mathrm{mod}~8 \! &0&1&2&3\\ \hline
\! \KO^{-n}_G(pt) \! &\! R_\mathbb{R}(G) \!&\! R_\mathbb{R}(G)/ R_\mathbb{C}(G) \!&\! R_\mathbb{C}(G)/R_\mathbb{H}(G) \!& 0
 \\ \hline
\! \KO^{-n}(pt) \!&\mathbb{Z}&\mathbb{Z}_2&\mathbb{Z}_2&0\\ \hline
\hline
\! n~\mathrm{mod}~8 \! &4&5&6&7 \\ \hline
\! \KO^{-n}_G(pt) \! 
&\! R_\mathbb{H}(G) \!&\! R_\mathbb{H}(G)/R_\mathbb{C}(G) \!&\! R_\mathbb{C}(G)/R_\mathbb{R}(G) \!& 0 \\ \hline
\! \KO^{-n}(pt) \! & \mathbb{Z}&0&0&0 \\ \hline
\end{array} \quad .
\eeq
Here the meaning of $R_\mathbb{R}(G)/ R_\mathbb{C}(G)$ and $R_\mathbb{H}(G)/R_\mathbb{C}(G)$
is that for $V \in R_\mathbb{C}(G)$ we take $V \oplus \overline{V}$ regarded as a (strict or pseudo) real representation
and impose $V \oplus \overline{V} \sim 0$ in $R_\mathbb{R}(G)$ and $R_\mathbb{H}(G)$, respectively. 
The meaning of $R_\mathbb{C}(G)/R_\mathbb{H}(G)$
and $R_\mathbb{C}(G)/R_\mathbb{R}(G)$ is that for $V \in R_\mathbb{R}(G)$ or $R_\mathbb{H}(G)$ we complexify $V$
and then impose $V \sim 0$ in $R_\mathbb{C}(G)$. For convenience, we have also explicitly given $\KO^{-n}(pt)$
which is obtained from more general $\KO^{-n}_G(pt)$ by taking $G=1$.
The mod-2 index of our interest in this paper is the group $\KO^{-n}(pt) \simeq \mathbb{Z}_2$ for $n =1,2 \mod 8$.

\paragraph{The case of 2d supersymmetric quantum field theories:}
Finally, let us briefly comment on the case of 2d $\mathcal{N}{=}(0,1)$ SQFT~\cite{Gaiotto:2019gef}.
The situation is basically the same as in the case of SQM as far as the mod-2 index is concerned.
We introduce spectator left-moving Majorana-Weyl fermions which are invariant under the supercharge $Q$
so that the total gravitational anomaly (or at least the time-reversal anomaly after dimensional reduction to SQM) is zero.
If we consider the theory on $S^1$ with the periodic (Ramond) spin structure, we get zero modes from the Majorana-Weyl fermions.
These zero modes play the role of $\psi_i$ in the above discussion. The contribution of the excited modes of the Majorana-Weyl fermions
gives the Dedekind $\eta$-function $\eta(\tau)$ which appear in the main text of this paper. 
In this way we get the invariant which takes values in $ \KO_G^{-n}((q))(pt)$. For $n=1,2$ mod 8 and $G=1$,
this invariant is the mod-2 Witten genus studied in this paper.

\def\arxivfont{\rm}
\bibliographystyle{ytamsalpha}

\baselineskip=.98\baselineskip
\if0
\let\originalthebibliography\thebibliography
\renewcommand\thebibliography[1]{
  \originalthebibliography{#1}
  \setlength{\itemsep}{0pt plus 0.3ex}
}
\fi

\bibliography{ref}

\end{document}